\documentclass[lettersize,journal]{IEEEtran}
\usepackage{subcaption} 
\usepackage{booktabs} 
\usepackage{array}    
\usepackage{amsfonts,amssymb,amsbsy,amsmath,paralist,bm,ifthen,color}
\usepackage{graphicx}
\usepackage[noadjust]{cite}
\usepackage[numbers,sort&compress,square]{natbib}
\usepackage{cases,booktabs}
\usepackage{flushend}
\usepackage{algorithmic,algorithm}
\usepackage{lipsum}
\usepackage{amsthm}
\makeatother

\newtheorem{Theorem}{Theorem}

\newcommand\Ac{\ensuremath{{\mathcal{A}}}}

\newcommand\Kc{\ensuremath{{\mathcal{K}}}}
\newcommand\Mc{\ensuremath{{\mathcal{M}}}}
\newcommand\Uc{\ensuremath{{\mathcal{U}}}}

\newcommand\wb{\ensuremath{{\bf w}}}

\newcommand\ub{\ensuremath{{\bf u}}}

\newcommand\Ib{\ensuremath{{\bf I}}}
\newcommand\Ic{\ensuremath{{\mathcal{I}}}}

\newcommand\Vb{\ensuremath{{\bf V}}}

\DeclareMathOperator*{\argmax}{argmax}

\graphicspath{{figures/}}

\definecolor{applegreen}{rgb}{0.55, 0.71, 0.0}

\begin{document}
	
	\title{Distributed User Scheduling in Multi-Cell MIMO O-RAN with QoS Constraints}

    	\author{Tenghao~Cai, Lei~Li,  \textit{Member, IEEE} and Tsung-Hui~Chang,  \textit{Fellow, IEEE} \vspace{-0.6cm}
		
		\thanks{ T. Cai is with the School of Science and Engineering, The Chinese University of Hong Kong, Shenzhen (CUHK-Shenzhen), China, and with the Shenzhen Research Institute of Big Data (SRIBD). L. Li and T.-H. Chang are with School of Artificial Intelligence, CUHK-Shenzhen and with the SRIBD.  (email: 221019048@link.cuhk.edu.cn, lei.ap@outlook.com, tsunghui.chang@ieee.org).}

	}

	\maketitle

	\begin{abstract}
		Distributed scheduling is essential for open radio access network (O-RAN) employing advanced physical-layer techniques such as multi-user MIMO (MU-MIMO), carrier aggregation (CA), and joint transmission (JT). This work investigates the multi-component-carrier (multi-CC) resource block group (RBG) scheduling in MU-MIMO O-RAN with both JT and non-JT users. We formulate a scheduling optimization problem to maximize throughput subject to user-specific quality of service (QoS) requirements while ensuring consistent allocations across cooperating O-RAN radio units (O-RUs) required by JT transmission. The strong variable coupling, non-convexity, and combinatorial complexity make the problem highly challenging. To tackle this, we extend the eigen-based zero-forcing transceiver design to JT users and leverage massive MIMO asymptotic properties to derive a tractable, separable rate approximation. Building on this, we develop two solutions: a centralized block coordinate descent benchmark and a distributed scheduler aligned with the O-RAN architecture. The proposed distributed scheme achieves near-centralized performance with only one round of lightweight coordination among cells, significantly reducing complexity and delay. Extensive simulations validate that our distributed scheduler achieves high scalability, fast convergence, and better QoS satisfaction rate in large-scale MU-MIMO networks.
	\end{abstract}
	\vspace{-0.3cm}
\section{Introduction}
	\label{sec:intro}
	\vspace{-0.05cm}
	The growing diversity of user services and rising demands for quality of service (QoS) have placed unprecedented pressure on wireless networks. To satisfy these requirements, multi-user multi-input multiple-output (MU-MIMO) systems have become a cornerstone of 5G and beyond, enabling base stations (BSs) to serve multiple user equipments (UEs) concurrently within the same time-frequency resources through spatial multiplexing. In parallel, to further boost capacity, carrier aggregation (CA) \cite{lin2022carrier,zhou2025multi} allows UEs to simultaneously access multiple component carriers (CCs) operating at distinct center frequencies. Each CC is divided into resource block groups (RBGs), providing finer spectrum granularity. 
	However, even with enhanced frequency-domain flexibility and spatial multiplexing, cell-edge UEs still suffer from weak signal strength and strong inter-cell interference (ICI). To address this issue, joint transmission (JT) was introduced to allow multiple BSs to coherently serve the same UE to improve the received signal quality \cite{marsch2011coordinated}. 
	To fully harness these benefits of MU-MIMO, multi-CC, and JT, it is essential to design an efficient scheduler that allocates the RBGs across multiple CCs to UEs in multiple cells to maximize system throughput while meeting QoS requirements. However, designing such a scheduler is notoriously non-trivial due to the following complexities \cite{he2022cross}:
	\vspace{-0.05cm}
	\begin{itemize}
		\item \textbf{Scheduling Consistency}:   For UEs jointly served by multiple BSs, referred to as JT-UEs, the scheduler must ensure consistent RBG allocations across all coordinating BSs for each CC. 
		
		\item \textbf{QoS Guarantee}: For UEs with stringent QoS requirements, the scheduler design needs to coordinate across BSs and CCs to jointly satisfy the UE's QoS requirement. This is particularly challenging since communications across BSs and CCs are strictly limited. 
		
		\item \textbf{Coupled Scheduling and Transmit Beamforming (TBF) Design}: Mitigating ICI requires multi-user TBF, whose design depends on the scheduling pattern. Yet, the scheduling itself depends on the interference induced by TBF, creating a tightly coupled optimization problem.
	\end{itemize}
	Accordingly, MU-MIMO scheduling exhibits strong non-convexity and coupling across CCs, BSs, and JT-UEs, yielding an NP-hard integer optimization problem \cite{denis2021improving}.
	
	Moreover, an efficient scheduler must be computation-fast and scalable with respect to the network size. \textit{Distributed scheduling} addresses this by offloading RBG allocation across network nodes and leveraging multi-core parallelism within each node. To this end, the scheduling design must align with the network architecture well. Among various wireless network architectures, the open radio access network (O-RAN), with its promoted openness and modularity, has emerged as a paradigm-shifting framework for next-generation wireless systems \cite{polese2023understanding}. In O-RAN’s disaggregated architecture, the open radio unit (O-RU) is responsible for RF signal transmission and reception, while the open distributed unit (O-DU), connected to multiple O-RUs, performs baseband processing and scheduling functions.
	In practical deployments, an O-DU is commonly realized as a computing platform equipped with multiple processing units (PUs) or accelerator cards, interconnected through high-speed on-board links to support parallel processing and lightweight inter-unit coordination \cite{10969847}.  	
    Under this architecture, beyond the intrinsic non-convexity and tight coupling of MU-MIMO scheduling across cells, carriers, and JT-UEs, the scheduler must effectively exploit the internal parallelism of the O-DU while maintaining scheduling consistency. Meanwhile, the information exchange among computational units is latency-sensitive and cannot be arbitrarily frequent. This makes the corresponding algorithm design more challenging.
    

	Although scheduling optimization in multi-cell networks has been widely studied, distributed multi-cell, multi-CC scheduling for MU-MIMO O-RAN that explicitly accounts for JT and QoS constraints remains unexplored. For example, algorithms in \cite{sun2010genetic,khan2020optimizing,yu2012coordinated,he2022cross,denis2021improving} considered joint optimization of user scheduling and resource allocation, but they focused on centralized processing.  Moreover, the works \cite{khan2020optimizing,yu2012coordinated,he2022cross,denis2021improving} were restricted to the MISO scenario. In contrast, the distributed schemes in  \cite{li2014multicell,li2015multicell,gamvrelis2022slinr,chen2023om} ignored QoS constraints, rendering them unsuitable for burst-demand services such as extended reality. While \cite{antonioli2020decentralized} considered QoS requirements, it took multiple rounds of information exchange between BSs due to the iterative TBF design, incurring substantial signaling overhead and latency that hinder practical deployment.
    
    In this work, we study cross-CC scheduling in MU-MIMO O-RAN involving both JT-UEs and NJT-UEs (i.e., UEs served by a single BS), and propose a highly parallel scheduling framework. By judiciously decomposing the scheduling problem into multiple subproblems and exploiting the parallel processing capability of the O-DU, our proposed scheme achieves distributed scheduling with only a single round of information coordination among multiple PUs of the O-DU, significantly reducing the cross-PU signaling overhead while effectively balancing spectral efficiency and diverse QoS requirements. The main contributions are summarized as follows:
	\vspace{-0.05cm}
    \begin{enumerate}
	\item \textbf{Multi-Cell Cross-CC Scheduling Formulation for MU-MIMO O-RAN}:  
	We consider general multi-cell cross-CC scheduling in MU-MIMO O-RAN, including both JT-UEs and NJT-UEs with heterogeneous QoS requirements. Given practical computational constraints where iterative TBF solutions are often prohibitive, we adopt eigen-based zero-forcing TBF (EZF-TBF) with a closed-form transceiver design, and formulate a scheduling optimization problem that jointly maximizes throughput and satisfies QoS constraints. Unlike prior work restricted to single-cell scheduling \cite{9488684}, single-CC operation \cite{antonioli2020decentralized}, or scheduling without explicit QoS guarantees \cite{chen2023om}, our model captures the intrinsic complexity of practical scheduling and incorporates inter-cell JT scheduling consistency and cross-CC QoS coupling.

	\item \textbf{Efficient Problem Approximation}:  
	To handle the highly non-convex problem with strong coupling between binary scheduling variables and intermediate TBF variables, we develop a novel approximation scheme. First, by exploiting the structural properties of EZF-TBF and the characteristics of ICI, we derive closed-form approximate rate expressions for JT-UEs, substantially alleviating cross-cell coupling. Leveraging massive MIMO asymptotics, we further remove the dependence on intermediate TBF variables, yielding simplified rate expressions that are amenable to distributed and parallel optimization.

	\item \textbf{Distributed Scheduling with Lightweight Coordination}:
	Building on the simplified problem, we develop a distributed scheduling scheme that fully exploits the parallel computing capabilities of multiple PUs within the O-DU while requiring only a single round of coordination. The scheme consists of three stages: decomposition, coordination, and refinement. First, the network-wide scheduling problem is decomposed into parallel sub-tasks executed across computational cores within the PUs, enabling scalable decision-making. Next, limited coordination occurs at two levels: within each PU, cores exchange local information to align NJT-UE QoS constraints across CCs; in parallel, a dedicated PU performs inter-PU coordination to ensure consistent JT-UE scheduling across cells. Finally, each core refines its local NJT-UE scheduling based on the feedback, further improving performance through localized optimization.

	\item \textbf{Extensive Validation}:  
	Extensive simulations demonstrate that the proposed distributed scheduler achieves near-centralized performance in terms of throughput and QoS satisfaction, while reducing computational time by more than an order of magnitude, making it well-suited for large-scale O-RAN deployments.
\end{enumerate}

	\vspace{-0.4cm}
	\subsection{Related Work}
	\vspace{-0.1cm}
	\label{subsec:related}
	As a cornerstone of modern wireless networks, the multi-cell user scheduling problem is inherently complex due to intertwined factors such as resource coordination and heterogeneous QoS requirements. The scheduling optimization is typically cast as an integer programming problem with combinatorial complexity and is well known to be NP-hard. Accordingly, exhaustive search with a prohibitive computational burden is impractical for real-world systems. {Similarly, scheduling schemes based on commercial solvers like Gurobi often suffer from prohibitive computational complexity and poor scalability \cite{bischoff2024real}, particularly under explicit QoS constraints.}
	Consequently, researchers have explored various centralized/distributed strategies, including relaxation-based methods \cite{denis2021improving}, genetic algorithms \cite{sun2010genetic}, branch-and-bound \cite{yu2012coordinated}, and Hungarian algorithm \cite{khan2020optimizing}. 
	In parallel, lightweight heuristic schemes such as proportional fairness \cite{jorswieck2009throughput} have been adopted, offering low-complexity implementations at the expense of performance loss.
	
	\textbf{Centralized scheduling} typically relies on a central unit that collects global information and optimizes network-wide decisions. In \cite{sun2010genetic}, genetic algorithms were employed to evolve user scheduling patterns to maximize the throughput in multi-cell MIMO systems under successive interference cancellation decoding strategies. The work \cite{khan2020optimizing} integrated fractional programming with the Hungarian algorithm for multi-band scheduling, alternately optimizing user assignments and TBF to maximize the weighted sum rate. 
    \textit{While effective, these designs share a critical limitation:} they predominantly targeted throughput maximization while neglecting explicit per-user QoS constraints.
	Compared with unconstrained formulations, incorporating QoS constraints makes multi-cell scheduling considerably harder. First, the feasible region becomes drastically compressed and typically non-convex. 
     Second, existing methods based on dual-decomposition for these problems often introduce additional optimization variables or parameters, which further increase computational complexity. 
	Third, unlike unconstrained cases where each RB can be optimized independently, 
	the scheduler needs to handle the variable coupling due to QoS constraints and jointly determine allocations across all RBs. This further increases the computational burden. 
	
	In existing literature, \cite{yu2012coordinated} studied multi-subchannel scheduling to maximize the number of scheduled users under QoS constraints, and proposed a joint scheduling and TBF algorithm via branch-and-bound. To enhance the QoS of users under weak coverage, cooperative transmissions have also been explored. The work \cite{he2022cross} proposed alternating optimization of user scheduling and TBF to maximize sum rate under QoS and power constraints in multi-cell JT networks. 
	Further, in the cell-free massive MIMO, \cite{denis2021improving} proposed a grouping-based carrier and power allocation approach under multi-band QoS constraints,  utilizing Lagrangian relaxation for carrier assignment to user clusters and sequential convex approximation for power control.
	However, these centralized approaches inevitably impose a substantial computational load on the central unit, especially in large-scale networks. This makes distributed scheduling schemes with lower complexity and better scalability particularly important.

	\textbf{Distributed approaches} 
    delegate scheduling optimization to individual BSs, thereby alleviating the computation load at a single node. However, the corresponding design faces several fundamental challenges. First, the strong inter-dependencies among variables make the problem difficult to decompose. Second, the limited local information at each node impedes accurate ICI estimation and results in imprecise rate predictions. Third, multi-round coordination among distributed nodes is often required to reach consensus, entailing substantial signaling overhead and delay.
    In the literature, early designs often relied on simple operational assumptions. 
    For instance, \cite{li2014multicell} devised a scheduling approach in a two-cell system, where each cell first selects candidate UEs via local semi-orthogonal user selection (SUS), exchanges these sets with the other BS, and then performs coordinated zero-forcing (ZF) TBF to cancel ICI. This was extended to multiple cells in \cite{li2015multicell} through sequential scheduling, where cells were activated in a predefined order, and each subsequent BS made scheduling decisions based on ICI from previously scheduled cells, albeit with latency scaling linearly with network size. 
    Besides, a leakage-based distributed scheme was proposed in \cite{gamvrelis2022slinr}, where the ICI was approximated by the leakage to the active users in other cells, estimated by a traffic-aware statistical model. While reducing signaling, this scheme cannot reliably handle explicit QoS constraints due to the inherent inaccuracy of leakage-based approximations for rate prediction.
    In addition, \cite{antonioli2020decentralized} introduced rate relaxation variables (RRVs) to allow soft violations of rate requirements and developed a primal–dual algorithm enabling distributed implementation via iterative exchange of dual variables and interference messages. However, this approach suffered from high signaling overhead due to the iterative message passing, and its RRV-based soft constraints inherently favored users with strong channels, further degrading cell-edge UE rates. 
	In \cite{chen2023om}, a real-time multi-cell MIMO scheduler for O-RAN was developed, leveraging GPU-based parallel processing to jointly optimize RB allocation, MCS selection, and TBF. 
	However, this design does not consider QoS constraints, and extending it to handle QoS requirements across multiple RBs remains challenging.
	
	\vspace{-0.2cm}
	\section{System Model and Problem Formulation}
	\vspace{-0cm}
	\subsection{System Model} 
    \vspace{-0cm}
	As shown in Fig. \ref{fig:sys}, we consider an O-RAN  with $M$ cells and $K$ UEs, each equipped with $N_r$ receive antennas. Each cell has an O-RU with $N_t$ antennas at its center, connected to a shared O-DU via fronthaul. The O-DU, responsible for baseband processing and scheduling tasks for the network, is composed of multiple PUs. Each PU contains multiple interconnected computational cores to enable parallel processing, and the PUs themselves are connected via high-speed on-board links to support coordination and information exchange. In practice, a PU can be realized on various hardware platforms, such as multi-core CPUs, system-on-chip (SoC) modules, or FPGA-based accelerators { \cite{10969847}}.
    \begin{figure}[t] 
		\centering	
		{\includegraphics[width=0.48\textwidth]{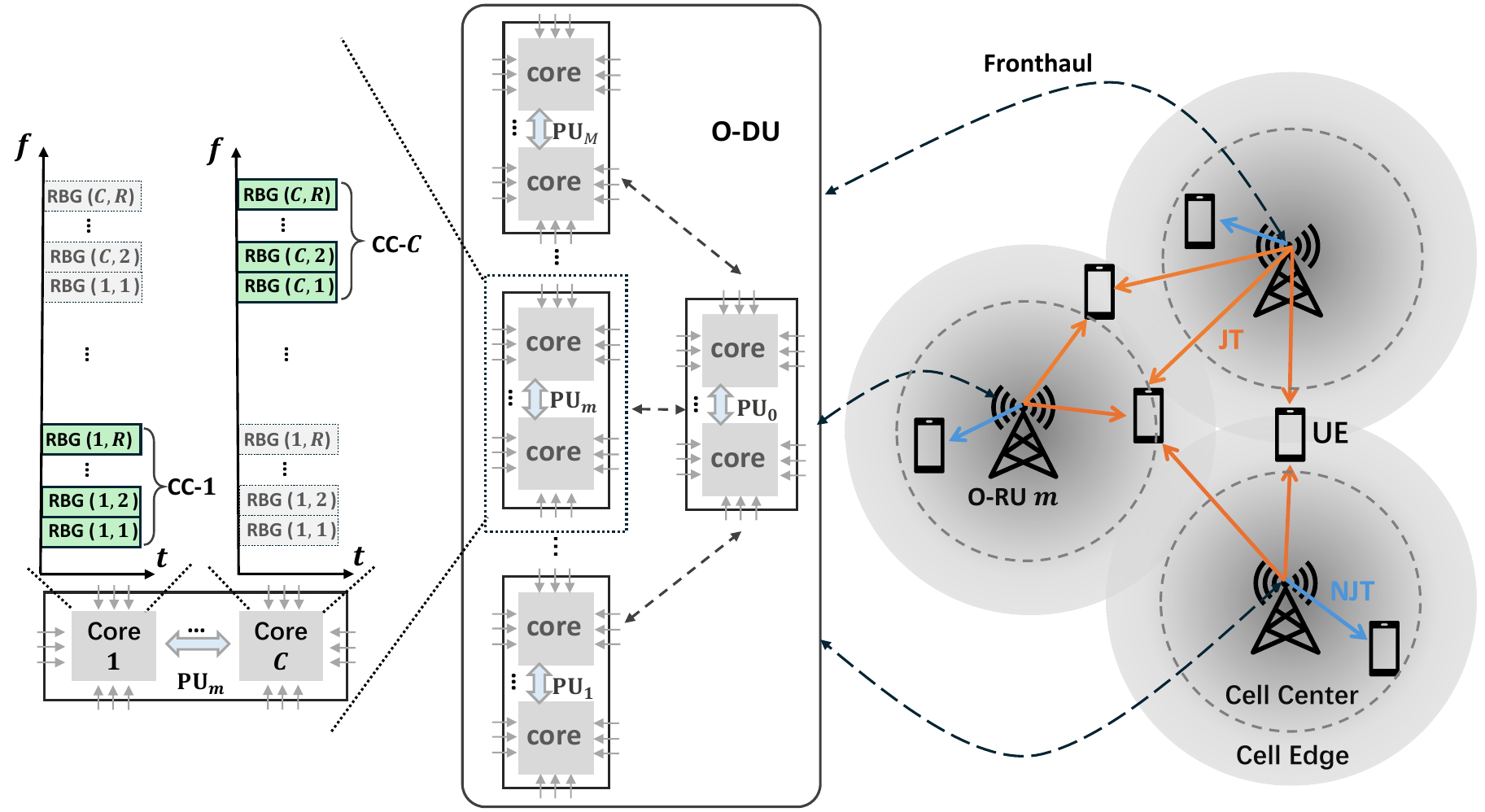}}\vspace{-0.15cm}	
		\caption{System Model} 
		\label{fig:sys} 
	\end{figure} 
    
    The set $\Ac_m$ of UEs associated with O-RU $m$ is assumed known and is partitioned based on their experienced interference into $\Ac_m \triangleq \{\Kc_m, \Uc_m\}$. Here, $\Kc_m$ represents cell-center UEs primarily affected by intra-cell interference, and $\Uc_m$ stands for the set of cell-edge UEs that experience strong ICI. To improve the service quality of cell-edge UEs, multiple O-RUs in adjacent cells employ JT to convert detrimental ICI into coherent useful signals, and these UEs are named JT-UEs. Meanwhile, cell-center UEs are served exclusively by their associated O-RU without JT and are referred to as NJT-UEs.
	
	The network operates under a time-slotted scheduling mechanism. We focus on scheduling within a single interval, during which the total available bandwidth comprises $C$ CCs, each of which occupies $R$ RBGs. Without loss of generality (w.l.o.g.), we assume that the O-DU comprises $M+1$ PUs, each containing $C$ computational cores. For $\text{PU}_1$ to $\text{PU}_M$, core $c$ on $\text{PU}_m$ is responsible for scheduling UEs of cell $m$ over the $c$-th CC, while $\text{PU}_0$ handles inter-cell coordination. Table \ref{tab:notations} lists the key notations used in the paper.
	
	    \begin{table}[t]
		\centering
		\caption{Notations}
		\label{tab:notations}
		\renewcommand{\arraystretch}{1.2}
		\begin{tabular}{@{}p{2.5cm}p{6cm}@{}}  
			\toprule
			\textbf{Notation} & \textbf{Definition} \\
			\midrule
			$\mathcal{M} \triangleq \{1,\ldots,M\}$ & Set of O-RUs  \\
			$\mathcal{K} \triangleq \{1,\ldots,K\}$ & Set of UEs \\
			$\mathcal{I} \triangleq \{1,\ldots,I\}$  & Set of JT-UEs \\
			$\mathcal{B}_i \subseteq \mathcal{M}$ & Subset of serving O-RUs for JT-UE $i$ \\
			$\mathcal{K}_m \subseteq \mathcal{K}$ & Set of cell-center/NJT-UEs in cell $m$ \\
			$\mathcal{U}_m \subseteq \mathcal{I}$ &Set of cell-edge/JT-UEs in cell $m$ \\
			$\mathcal{K}_m^{Q}\subseteq \mathcal{K}_m$ & Set of NJT-UEs with QoS requirement in cell $m$ \\ $\mathcal{\hat{K}}_m^{Q}\subseteq \mathcal{K}_m $ & Set of NJT-UEs w/o QoS requirement in cell $m$\\ 
			$\mathcal{U}_m^{Q}\subseteq \mathcal{U}_m$ & Set of JT-UEs with QoS requirement in cell $m$ \\ $\mathcal{\hat{U}}_m^{Q}\subseteq \mathcal{U}_m $ & Set of JT-UEs w/o QoS requirement in cell $m$\\ 
			$\mathcal{I}^{Q} \subseteq \mathcal{I}$ & Set of JT-UEs with QoS requirement \\
			$\mathcal{\hat{I}}^Q \subseteq \mathcal{I} $ & Set of JT-UEs w/o QoS requirement  \\
			$\mathcal{C} \triangleq \{1,\ldots,C\}$ & Set of CCs \\
			$\mathcal{R} \triangleq \{1,\ldots,R\}$ &Set of RBGs \\
			RBG $(c,r)$ & $r$-th RBG in  the $c$-th CC\\
			$\mathbf{u}^{c,r}_{k} \in \mathbb{C}^{N_r \times 1}$ & Receive combiner for UE $k$ on RBG $(c,r)$\\
			$\mathbf{w}_{m,k}^{c,r} \in \mathbb{C}^{N_t \times 1}$ & TBF from O-RU $m$ to UE $k$ on RBG $(c,r)$\\
			\bottomrule
		\end{tabular}  
	\end{table}

	Denote $b_{m,k}^{c,r} \in \{0,1\}$ as the scheduling variable indicating whether UE $k \in \mathcal{A}_m$ is allocated to RBG $r$ of the $c$-th CC, 
	
	\vspace{-0.35cm}
    \begin{align}
    			b_{m,k}^{c,r}= \begin{cases}1, & \text{if UE $k$ is scheduled      on RBG $(c,r)$,} \\ 0, & \text{otherwise.}\end{cases}
    		\end{align}
    \vspace{-0.35cm}
	
	\noindent Assume that the channel from O-RU $m$ to UE $k$ on RBG $(c,r)$ as $\mathbf{H}_{m,k}^{c,r} \in \mathbb{C}^{N_r \times N_t}$ is available. 
	The transmit data symbol for UE $k$ satisfies $\mathbb{E}[|x_k|^2] = 1$ and is assumed to be mutually independent across different UEs. After applying the TBF $\wb^{c,r}_{m,k}$ and the receive combiner $\ub_k^{c,r}$, the received signal of NJT-UE $k$ in cell $m$ on RBG $(c,r)$ is written as  

    \vspace{-0.3cm}
		\begin{align}\label{received signal NJT_UE}
			{y}_{m,k}^{c,r} 
			 = &(\mathbf{u}^{c,r}_{k})^\mathrm{H}   
			\mathbf{H}_{m,k}^{c,r} \mathbf{w}_{m,k}^{c,r} x_{k} \notag \\
			+ & \sum\nolimits_{n\in \mathcal{M} } \sum\nolimits_{\substack{t \in \mathcal{K}_{n},t \neq k}} (\mathbf{u}^{c,r}_{k})^\mathrm{H}b_{n,t}^{c,r}\mathbf{H}_{n,k}^{c,r} \mathbf{w}_{n,t}^{c,r} x_{t}   \\
			+ & \sum\nolimits_{i \in \mathcal{I}} \sum\nolimits_{\ell \in \mathcal{B}_i}  (\mathbf{u}^{c,r}_{k})^\mathrm{H}b_{\ell,i}^{c,r}\mathbf{H}_{\ell,k}^{c,r} \mathbf{w}_{\ell,i}^{c,r} x_{i} + (\mathbf{u}^{c,r}_{k})^\mathrm{H}\mathbf{n}_k, \notag
		\end{align}
	\vspace{-0.3cm}
	
	\noindent where $\mathbf{n}_k \sim \mathcal{CN}(0, \sigma^2 \Ib_{N_r})$ is the additive white Gaussian noise (AWGN).  The first three terms in \eqref{received signal NJT_UE} represent the desired signal, the interference from co-scheduled NJT-UEs, and the interference from co-scheduled JT-UEs, respectively.

	The received signal ${y}_{i}^{c,r}$ of JT-UE $i$ over RBG $(c,r)$ is
	
	\vspace{-0.4cm}
	
		\begin{align}\label{receive-JT}
			{y}_{i}^{c,r}\! & \!= \!    \sum\nolimits_{\ell \in \mathcal{B}_i} (\mathbf{u}^{c,r}_{i})^\mathrm{H} \mathbf{H}_{\ell,i}^{c,r} \mathbf{w}_{\ell,i}^{c,r}  x_{i}   \notag \\ 
			&   +\! \sum\nolimits_{\ell \in \mathcal{M}}\sum\nolimits_{\substack{t \in \mathcal{K}_\ell}} (\mathbf{u}^{c,r}_{i})^\mathrm{H}b_{\ell,t}^{c,r} \mathbf{H}_{\ell,i}^{c,r} \mathbf{w}_{\ell,t}^{c,r}  x_{t}    \\
			&  +\!\sum\nolimits_{j \in \mathcal{I}, j \ne i} \sum\nolimits_{\ell \in \mathcal{B}_{j}} \! (\mathbf{u}^{c,r}_{i})^\mathrm{H}b_{\ell,j}^{c,r}\mathbf{H}_{\ell,i}^{c,r} \mathbf{w}_{\ell,j}^{c,r} x_{j} \!+ \!(\mathbf{u}^{c,r}_{i})^\mathrm{H}\mathbf{n}_i  \notag,
		\end{align}

	\noindent where the first three terms comprise the desired signal from its serving cells, along with interference from all co-scheduled NJT-UEs and other co-scheduled JT-UEs. 
	
	Then, from \eqref{received signal NJT_UE}, the signal-to-interference-plus-noise-ratio (SINR) of NJT-UE $k \in \Kc_m$ on RBG $(c,r)$ is
	
	\vspace{-0.35cm}
	
		\begin{align}\label{gamma}
			\gamma_{m,k}^{c,r}(\{ b_{m,k}^{c,r},\mathbf{w}_{m,k}^{c,r}\} ) &
			=  |   (\mathbf{u}^{c,r}_{k})^\mathrm{H} \mathbf{H}_{m,k}^{c,r} \mathbf{w}_{m,k}^{c,r} |^2 \notag \\
			&   \!\!\!\!\!\!\!\!\!\! \!\!\!\!\!\!\!\!\!\! \!\!\!\!\!\!\!\!\!\! \!\!\!\!\!\!\!\!\!\! \times\bigr( \sum\nolimits_{n\in \mathcal{M}  } \sum\nolimits_{\substack{t \in \mathcal{K}_{n},t\neq k}} |(\mathbf{u}^{c,r}_{k})^\mathrm{H}b_{n,t}^{c,r}\mathbf{H}_{n,k}^{c,r} \mathbf{w}_{n,t}^{c,r}  |^2\\
			&  \!\!\!\!\!\!\!\!\!\! \!\!\!\!\!\!\!\!\!\!  \!\!\!\!\!\!\!\!\!\!  \!\!\!\!\!\!\!\!\!\! + \sum\nolimits_{i \in \mathcal{I}} \bigr|\sum\nolimits_{\ell \in \mathcal{B}_i} (\mathbf{u}^{c,r}_{k})^\mathrm{H}b_{\ell,k}^{c,r}\mathbf{H}_{\ell,k}^{c,r} \mathbf{w}_{\ell,i}^{c,r}  \bigr|^2+\sigma^2 \bigr)^{-1}, \notag
		\end{align}

    \vspace{-0.1cm}
	\noindent and, from \eqref{receive-JT}, the SINR for JT-UE  $i$ is
	
	\vspace{-0.4cm}
	
		\begin{align}\label{gamma_JT}
			\gamma_{i}^{c,r}(\{ b_{m,k}^{c,r},\mathbf{w}_{m,k}^{c,r}\} ) = &  \bigr| \sum\nolimits_{\ell \in \mathcal{B}_i} (\mathbf{u}^{c,r}_{i})^\mathrm{H}\mathbf{H}_{\ell,i}^{c,r} \mathbf{w}_{\ell,i}^{c,r}  \bigr|^2 \notag \\
			& \!\!\!\!\!\!\!\!\!\! \!\!\!\!\!\!\!\!\!\! \!\!\!\!\!\!\!\!\!\! \!\!\!\!\!\!\!\!\!\! \!\!\!\!\!\!\!\!\!\!  \times\bigr(  \sum\nolimits_{\ell \in \mathcal{M}}\sum\nolimits_{\substack{t \in \mathcal{K}_\ell}} |(\mathbf{u}^{c,r}_{i})^\mathrm{H}b_{\ell,t}^{c,r} \mathbf{H}_{\ell,i}^{c,r} \mathbf{w}_{\ell,t}^{c,r} |^2   \\
			&\!\!\!\!\!\!\!\!\!\!  \!\!\!\!\!\!\!\!\!\!  \!\!\!\!\!\!\!\!\!\!  \!\!\!\!\!\!\!\!\!\! \!\!\!\!\!\!\!\!\!\!+\sum\nolimits_{j \in \mathcal{I}, j \ne i} \bigr|\sum\nolimits_{\ell \in \mathcal{B}_{j}} (\mathbf{u}^{c,r}_{i})^\mathrm{H}b_{\ell,j}^{c,r}\mathbf{H}_{\ell,i}^{c,r} \mathbf{w}_{\ell,j}^{c,r}   \bigr|^2 +\sigma^2  \bigr)^{-1}. \notag
		\end{align}
	\vspace{-0.4cm}
	
	\noindent Thus, the data rate for NJT-UE $k$ in cell $m$ on RBG $(c,r)$ is
	
	\vspace{-0.4cm}
		\begin{align}\label{realrate}
			f_{m, k}^{c,r}(\{ b_{m,k}^{c,r},\mathbf{w}_{m,k}^{c,r}\} ) = b_{m, k}^{c,r}\log(1 + \gamma_{m, k}^{c,r}),
		\end{align}
	\vspace{-0.4cm}
	
	\noindent and the rate for JT-UE $i$ in cell $m$ on RBG $(c,r)$ is
	
	\vspace{-0.35cm}
	
		\begin{align}\label{realrate11}
			f_{i}^{c,r}(\{ b_{m,k}^{c,r},\mathbf{w}_{m,k}^{c,r}\} ) =  b_{m,i}^{c,r}\log(1 + \gamma_{i}^{c,r}).
		\end{align}
	 \vspace{-0.5cm}
	
	\noindent Note that the scheduling indicators for JT-UE $i$ need to be consistent across all its serving O-RUs, i.e., $b_{m,i}^{c,r} = b_{n,i}^{c,r},$ $\forall m,n \in \mathcal{B}_i$, to ensure coherent joint transmission. 
	\vspace{-0.3cm}
	\subsection{Transceiver Design Based on EZF}

    Given limited computational resources and real-time processing requirements, (semi-)closed-form BF solutions are generally preferred over iterative methods in practice. This motivates us to formulate the scheduling problem using EZF-BF \cite{sun2010eigen}. Unlike ZF-BF that directly inverts the channel matrix and can be sensitive to poor conditioning, EZF-BF operates in the dominant channel eigenspace obtained via eigen-decomposition. By suppressing interference across the most significant eigenmodes, EZF-BF improves numerical stability and robustness in ill-conditioned channel scenarios \cite{kaziu2024approximate}. 
	
	Specifically, for each NJT-UE $ k \in \mathcal{K}_m$ scheduled on RBG $(c,r)$, with the intra-cell CSI $\mathbf{H}_{m, k}^{c,r}$, the serving O-RU $m \in \mathcal{M}$ first conducts singular value decomposition (SVD) as
	$\mathbf{H}_{m, k}^{c,r} =\mathbf{T}_{k}^{c,r} \mathbf{\Lambda}_{ k}^{c,r} ({\mathbf{V}_{m, k}^{c,r}})^\mathrm{H}$, 
	where $\mathbf{\Lambda}_{k}^{c,r}$ is an $N_r \times N_t$ diagonal matrix containing singular values of $\mathbf{H}_{m,k}^{c,r}$ in the decreasing order along its main diagonal, the largest of which is denoted by $\lambda_{k}^{c,r}$. The terms $\mathbf{T}_{k}^{c,r} \in \mathbb{C}^{N_r \times N_r}$ and $\mathbf{V}_{m,k}^{c,r} \in \mathbb{C}^{N_t \times N_t}$ are the corresponding unitary matrix composed of the left singular vectors and right singular vectors, respectively.  
	
	Similarly, for JT-UE $i  \in \mathcal{U}_m$ served by multiple O-RUs, the transceiver is jointly designed based on all channels from its serving O-RUs.  
	Specifically, by stacking all the channel matrices from the serving O-RUs $m \in \mathcal{B}_i$ to JT-UE $i$, its aggregated channel matrix on RBG $(c,r)$ is 
	
	\vspace{-0.4cm}
	
		\begin{align} \label{eq:SVd_JT}
			\mathbf{H}_{i}^{c,r} \triangleq   [ \mathbf{H}_{m,i}^{c,r} ]_{m \in \mathcal{B}_i}  \in \mathbb{C}^{N_r \times |\mathcal{B}_i|N_t }. 
		\end{align}
	\vspace{-0.5cm}
	
	\noindent Let its SVD be expressed as $\mathbf{H}_{i}^{c,r} = \mathbf{T}_{i}^{c,r} \mathbf{\Sigma}_{i}^{c,r} (\mathbf{V}_{i}^{c,r})^\mathrm{H},$
	where $\mathbf{V}_{i}^{c,r}\in \mathbb{C}^{|\mathcal{B}_i|N_t \times |\mathcal{B}_i|N_t}$ is the unitary right singular vector matrix. $\mathbf{\Sigma}_{i}^{c,r} \in \mathbb{C}^{N_r \times |\mathcal{B}_i|N_t}$ holds the singular values in descending order, with its largest $\lambda_{i}^{c,r}$. 
	It can be found from \eqref{eq:SVd_JT} that $\{\mathbf{H}_{m,i}^{c,r}\}_{m \in \mathcal{B}_i}$ share the same left singular vectors and singular values, i.e., $
	\mathbf{H}_{m,i}^{c,r} \triangleq \mathbf{T}_i^{c,r} \mathbf{\Sigma}_i^{c,r} (\mathbf{V}_{m,i}^{c,r})^\mathrm{H}$, where $\mathbf{V}_{m,i}^{c,r}$ is the corresponding sub-matrix that can be attained from $\Vb_i^{c,r}$.
	

    While it is possible to consider global EZF \cite{li2014multicell}, which jointly nulls both intra- and inter-cell interference, it may not be a good choice in practice. First,  its matrix inversion complexity scales with the total number of scheudled users, and with a limited number of transmit antennas, the available spatial degrees of freedom may be insufficient for effective global nulling. Moreover, since cell-center users experience negligible ICI, enforcing global nulling can waste spatial resources. Therefore, we adopt intra-cell EZF in this work, where each cell only nulls the intra-cell interference among its scheduled UEs. 
    
    To illustrate this, we consider both NJT and JT UEs $\mathcal{A}_m$ in each cell $m$. Let $\mathbf{t}_{k}^{c,r}$ and $\mathbf{v}_{m,k}^{c,r}$ be the first column of $\mathbf{T}_{k}^{c,r}$ and $\mathbf{V}_{m,k}^{c,r}$, respectively. To maximize the receive SNR of UE $k \in \Ac_m$ in RBG $(c,r)$, we set $\mathbf{u}_{k}^{c,r} = \mathbf{t}_{k}^{c,r}$ and it leads to
	
	\vspace{-0.4cm}
	
		\begin{align}\label{uH}   (\mathbf{u}_{k}^{c,r})^{\mathrm{H}}\mathbf{H}_{m,k}^{c,r} = \lambda_k^{c,r} (\mathbf{v}_{m,k}^{c,r})^{\mathrm{H}}.
		\end{align}
	\vspace{-0.5cm}
	
	\noindent Based on the equivalent channel in \eqref{uH}, EZF projects intended signals onto the orthogonal complement of its interference channel subspace, thereby mitigating inter-UE interference. Specifically, denote the set of UEs in cell $m$ that are scheduled over RBG $(c,r)$ as $\mathcal{A}_m^{c,r} \triangleq \{a_1 ,..., a_m\} \in \mathcal{A}_m$, whose cardinality is given by $ |\mathcal{A}_m^{c,r}|= \sum\nolimits_{t\in \Ac_m}b^{c,r}_{m,t}$. By stacking the right singular vectors of these UEs as
	
	\vspace{-0.2cm}
	
		\begin{equation}\label{eqV}
			\hat{\mathbf{V}}_m^{c,r} \triangleq \left[ {\mathbf{v}}_{m,a_1}^{c,r},  \ldots, {\mathbf{v}}_{m,a_{m}}^{c,r} \right], 
		\end{equation}
	\vspace{-0.5cm}
	
	\noindent the EZF TBF at O-RU $m$ over RBG $(c,r)$ is computed by
	
	\vspace{-0.35cm}
	
		\begin{align}\label{bF}
			\!\! \hat{\mathbf{W}}_m^{c,r} = &\hat{\mathbf{V}}_m^{c,r} ( (\hat{\mathbf{V}}_m ^{c,r})^\mathrm{H} \hat{\mathbf{V}}_m^{c,r} )^{-1} =\left[ \hat{\mathbf{w}}_{m,a_1}^{c,r}, \ldots, \hat{\mathbf{w}}_{m,a_{m}}^{c,r} \right].
		\end{align} 
	\vspace{-0.5cm}
	
	\noindent Given that the total transmit power per RBG $(c,r)$ in each O-RU $m$ as $P$, each EZF-TBF with equal power allocation (EPA) is 
	
	\vspace{-0.2cm}
	
		\begin{equation}\label{normalized}
			\mathbf{w}_{m,j}^{c,r} = \sqrt{P/|\mathcal{A}_m^{c,r}|}\hat{\mathbf{w}}_{m,j}^{c,r}/\| \hat{\mathbf{w}}_{m,j}^{c,r} \|.
		\end{equation}
	 \vspace{-0.5cm}
	
	\noindent 
    Notice that the structure of the spatial direction matrix $\mathbf{\hat{V}}^{c,r}_m$ in \eqref{eqV} and the unnormalized TBF matrix $\mathbf{\hat{W}}^{c,r}_m$ in \eqref{bF} lead to the orthogonality $
	(\mathbf{\hat{V}}^{c,r}_{m})^\mathrm{H}  \mathbf{\hat{W}}^{c,r}_{m} = \mathbf{I}, 
	\forall m,c,r.$
	Accordingly, for any two UEs $j$ and $k$ in cell $m$, $\mathbf{w}^{c,r}_{m,j}$ in \eqref{normalized} satisfies
	
	\vspace{-0.35cm}
	
		\begin{align}\label{vw}
			\!\!\!(\mathbf{v}_{m,k}^{c,r})^\mathrm{H}\mathbf{w}_{m,j}^{c,r} \!= \!\begin{cases} {\!\!\sqrt{P/|\mathcal{A}_m^{c,r}|}}/{\|\mathbf{\hat{w}}_{m,k}^{c,r}\|}, & \text{if $j\!=\!k$},  \\ \!\! 0, & \text{otherwise.}\end{cases}
		\end{align}
	\vspace{-0.7cm} 
	
	\subsection{Multi-Dimensional User Scheduling Problem} 
	Building upon the EZF transceiver design, we develop a scheduling framework that supports a multi-service coexistence scenario with two types of UEs:
	\begin{itemize}
		\item \textbf{UEs with QoS requirement}: These UEs require low-delay guarantees—treated as a key QoS metric—such as image delivery for interactive applications. Generally, they have a finite data volume at the O-DU.
		\item \textbf{UEs without QoS requirement}: These UEs do not have stringent delay constraints and primarily aim to maximize sustained throughput, such as high-definition video downloads. Their traffic is typically characterized by continuous data generation at the O-DU.
	\end{itemize}
	In this work, our design aims to maximize the sum rate of the second type of UEs $k \in \hat\Kc_m^Q, m \in \Mc, i \in \hat\Ic^Q$, while ensuring the QoS requirements $\{Q_k, Q_i\}$ of the first type of UEs $k \in \Kc_m^Q, m \in \Mc, i \in \Ic^Q$.
	Accordingly, the multi-dimensional user scheduling optimization can be formulated as the following nonlinear integer programming (NLIP) problem
	
	\vspace{-0.4cm}
	
		\begin{subequations}\label{SRmax-original}
			\begin{align}
				\!\!\!\!\max\limits_{\{b^{c,r}_{m,k}\}}  & \! \sum\nolimits_{(c,r)}\!\!\left(\sum\nolimits_{m \in \mathcal{M}}\! \sum\nolimits_{k \in \mathcal{\hat{K}}_m^Q} \!f_{m, k}^{c,r}\!+\!\sum\nolimits_{i \in\mathcal{\hat{I}}^Q} \! f_{i}^{c,r} \right)\label{SRmax0_a}\\
				\text { s.t. } & b_{m, k}^{c,r} \in \{0,1\}, \forall m, c,r, k, \label{Binary}\\
				&   b_{m,i}^{c,r} =  b_{n,i}^{c,r} , \forall  m,n\in \mathcal{B}_i ,  \forall i \in \mathcal{I}, \forall c,r,\label{JT0}\\
				& \sum\nolimits_{(c,r)} f_{m,k}^{c,r} \geq Q_{k}, \forall m, k  \in \mathcal{K}_m^{ {Q}},  \label{KNF0}\\
				&\sum\nolimits_{(c,r)} f_{i}^{c,r} \geq Q_i, \forall i \in \mathcal{I}^{ {Q}},\label{INF0}
			\end{align}
		\end{subequations}
	\vspace{-0.4cm}
	
	\noindent where the rate function $\{f^{c,r}_{m,k}, f^{c,r}_{i}\}$ is derived from \eqref{realrate} and \eqref{realrate11}. In addition, constraint \eqref{JT0} enforces scheduling consistency for JT-UEs across all coordinating O-RUs per RBG. Meanwhile, constraints \eqref{KNF0} and \eqref{INF0} specify QoS requirements for NJT-UEs and JT-UEs, respectively.  
	
	In problem \eqref{SRmax-original}, the TBF and the scheduling spans over multiple dimensions, including cells, CCs, JT-UEs, and NJT-UEs with or without QoS constraints. This makes \eqref{SRmax-original} far more intricate than conventional formulations \cite{chen2023om}, which are limited to parts of these dimensions. Optimization of \eqref{SRmax-original} is quite challenging due to the following factors:
	
	\begin{enumerate}
		\item \textbf{Combinatorial Explosion}: As an NLIP problem, its solution space grows exponentially as $\mathcal{O}(2^{K \times C \times R})$ with increasing UEs ($K$) and RBGs ($C \times R$), rendering exhaustive search computationally prohibitive;
		
		\item \textbf{Strong Coupling}: The generation of EZF beamformers $\{\wb_{m,j}^{c,r}\}$ depends on the scheduling decisions $\{b^{c,r}_{m,k}\}$. Accordingly, the scheduling variables and TBF variables are strongly coupled across multiple dimensions;
		
		\item \textbf{NP-hardness}: The non-convexity and the non-smooth structure of problem~\eqref{SRmax-original} make it NP-hard~\cite{denis2021improving}.
	\end{enumerate}
	Even without considering distributed optimization tailored to the O-RAN architecture, these characteristics render the design of a polynomial-complexity algorithm for achieving a local optimum highly non-trivial. To tackle them, we first propose a novel reformulation scheme to simplify the problem structure. 
	\vspace{-0.4cm} 
	\section{A Novel Problem Reformulation} \label{sec:reformu}    
	In this section, we propose a novel reformulation to simplify problem \eqref{SRmax-original} by leveraging the system properties. Notice that the problem complexity primarily stems from the rate expressions $\{f_{m,k}^{c,r}, f_i^{c,r}\}$. To handle them, our proposed reformulation scheme comprises two phases:  \textbf{Phase 1}
	leverages interference characteristics and the EZF transceiver structure to simplify the SINR formulation. Using the log-fractional form of the rate and Jensen’s inequality, a closed-form rate approximation for JT-UEs is derived, which decouples variables across PUs and reduces coordination complexity. \textbf{Phase 2} applies massive MIMO asymptotic analysis \cite{vershynin2009high} to eliminate the implicit dependence on EZF TBF in \eqref{realrate} and \eqref{realrate11}, achieving tractable rate expressions.  
	\vspace{-0.3cm}
	\subsection{ \textbf{Phase 1}}
	Recall that transmission strategies differ between cell-center and cell-edge UEs. For cell-center UEs, the desired signal from the serving O-RU typically dominates ICI, as the serving link’s channel gain is much stronger than others. Hence, NJT is sufficient for these UEs. In contrast, cell-edge UEs experience comparable signal strengths from multiple O-RUs, necessitating JT to mitigate interference. Accordingly, it is reasonable to make the following assumption \cite{chen2022m,seifi2011coordinated}. 
	
	\noindent \textbf{Assumption 1}: The ICI from the non-serving O-RUs to NJT-UEs is negligible, while JT-UEs only experience interference from their served O-RUs.

	\begin{figure}[!htbp] 
		\centering	        {\includegraphics[width=0.28\textwidth]{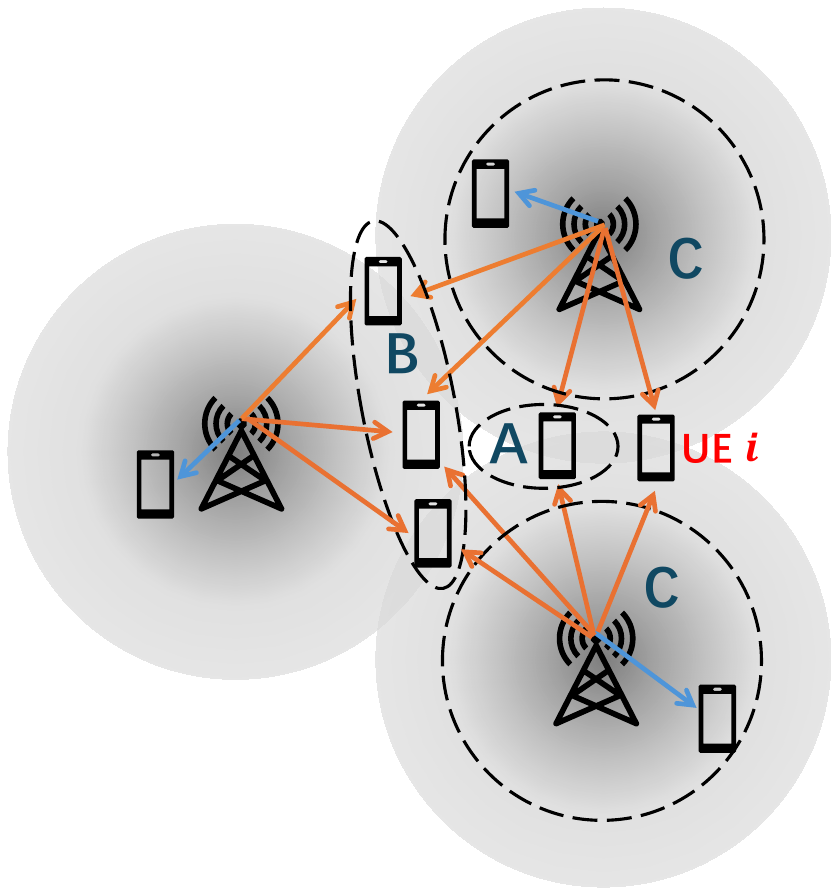}}	
		\vspace{-0.15cm}
		\caption{{The interference sources {(circled by dashed lines)} experienced by JT-UE $i$.}} 
		\label{interference} 
	\end{figure}  
	
	\noindent Based on this assumption, the ICI term in the SINR \eqref{gamma} of NJT-UE $k\in \Kc_m$ can be omitted, yielding an approximate SINR expression \eqref{gamma1} shown at the top of the next page.
	\begin{figure*}[!t]

			\begin{align}\label{gamma1}
				\!\!\!\gamma_{m,k}^{c,r} & \approx |   (\mathbf{u}^{c,r}_{k})^\mathrm{H} \mathbf{H}_{m,k}^{c,r} \mathbf{w}_{m,k}^{c,r}  |^2 \bigr( \sum\nolimits_{\substack{j \neq k, j \in \mathcal{K}_m }} |(\mathbf{u}^{c,r}_{k})^\mathrm{H} b_{m,j}^{c,r} \mathbf{H}_{m,k}^{c,r} \mathbf{w}_{m,j}^{c,r}  |^2  + \sum\nolimits_{i \in \mathcal{U}_m} | (\mathbf{u}^{c,r}_{k})^\mathrm{H} b_{m,i}^{c,r} \mathbf{H}_{m,k}^{c,r} \mathbf{w}_{m,i}^{c,r}  |^2 + \sigma^2 \bigr)^{-1}  
			\end{align}
			\vspace{-0.4cm}
			\begin{align}\label{gamma2}
				&\gamma_{i}^{c,r} \approx    \bigr|  \sum\nolimits_{m \in \mathcal{B}_i}(\mathbf{u}^{c,r}_{i})^\mathrm{H}  \mathbf{H}_{m,i}^{c,r} \mathbf{w}_{m,i}^{c,r}   \bigr|^2   \bigr( \sum\nolimits_{j\in (\cap_{m\in \mathcal{B}_i}\mathcal{U}_m), j \ne i} \bigr|\sum\nolimits_{m \in \mathcal{B}_i} (\mathbf{u}^{c,r}_{i})^\mathrm{H}b_{m,j}^{c,r}\mathbf{H}_{m,i}^{c,r} \mathbf{w}_{m,j}^{c,r}   \bigr|^2     +\\
				&\sum\nolimits_{\substack{j \notin (\cap_{m\in \mathcal{B}_i}\mathcal{U}_m), j \in (\cup_{m\in \mathcal{B}_i} \mathcal{U}_m) }} \bigr|\sum\nolimits_{m\in (\mathcal{B}_i\cap \mathcal{B}_j) }(\mathbf{u}^{c,r}_{i})^\mathrm{H}b_{m,j}^{c,r}\mathbf{H}_{m,i}^{c,r} \mathbf{w}_{m,j}^{c,r}   \bigr|^2 \!+\!\sum\nolimits_{\substack{m \in \mathcal{B}_i}}\sum\nolimits_{j \in \mathcal{K}_m} |(\mathbf{u}^{c,r}_{i})^\mathrm{H}b_{m,j}^{c,r} \mathbf{H}_{m,i}^{c,r} \mathbf{w}_{m,j}^{c,r}   |^2 +  \sigma^2  \bigr)^{-1} \notag
			\end{align}
		
		\vspace{-0.3cm}
		\hrulefill
	\end{figure*}
	In \eqref{gamma1}, the first and second terms in the denominator represent the intra-cell interference caused to user $k$ by the NJT-UEs and JT-UEs jointly scheduled on RBG $(c,r)$ from cell $m$, respectively. 
	
	For JT-UEs $i$, the interference is more complex and originates from three types of UEs (see Fig.~\ref{interference}):
	\begin{enumerate}
		\item \textit{Co-scheduled JT-UEs} with identical serving O-RU set ($\cap_{m\in \mathcal{B}_i}\mathcal{U}_m$, area A);
		\item \textit{Partially overlapping JT-UEs} sharing subsets of $\mathcal{B}_i$ ($\cup_{m\in \mathcal{B}_i}\mathcal{U}_m \setminus \cap_{m\in \mathcal{B}_i} \mathcal{U}_m$, area B);
		\item \textit{NJT-UEs} served by $\mathcal{B}_i$'s O-RUs (area C).
	\end{enumerate} 
	Accounting for these interference sources, the SINR for JT-UE $i \in \Ic$ in \eqref{gamma2} is shown at the top of next page.
	
	The denominators in \eqref{gamma1} (for NJT-UEs) and \eqref{gamma2} (for JT-UEs) reveal that every term of inter-UE interference suffered by one UE stems from its serving O-RUs. 
	Based on the properties in \eqref{uH} and \eqref{vw}, the SINR expression for NJT-UE $k\in \Kc_m$ in  \eqref{gamma1} can be simplified as
	
	\vspace{-0.2cm}
	
		\begin{align}\label{SINR_BF}
			\gamma_{m,k}^{c,r} = &    {(\lambda_{k}^{c,r})^2P}/{(|\mathcal{A}_m^{c,r}| \cdot \| \hat{\mathbf{w}}_{m,k}^{c,r} \|^2\sigma^2)},
		\end{align}
	\vspace{-0.2cm}
	
	\noindent where $\|\hat{\mathbf{w}}_{m,k}^{c,r} \|^2$ depends on the scheduling variables of O-RU $m$ implicitly, as shown in \eqref{eqV} and \eqref{bF}. 	
	Similarly, the SINR for JT-UE $i$ on RBG $(c,r)$ in \eqref{gamma2} can be further expressed as
	
	\vspace{-0.35cm}
	
		\begin{align}\label{JT SINR}
			\gamma_{i}^{c,r} =    \big| \sum\nolimits_{m \in \mathcal{B}_i} \lambda_{i}^{c,r} \sqrt{P}/ (\sqrt{|\mathcal{A}_m^{c,r}|}\cdot { \|\hat{\mathbf{w}}_{m,i}^{c,r}\| \sigma} )\big|^2.
		\end{align}
	\vspace{-0.35cm}

    
	{Since RBGs in MU-MIMO systems are typically allocated to UEs with good channel quality and low inter-UE interference, the resulting SINR on scheduled RBGs is generally high \cite{Yubo_TWC24}.} Therefore, the rate expression can be approximated by
	$\log(1+\gamma)\!\approx\!\log(\gamma)$, enabling the rates of both NJT-UE $k$ and JT-UE $i$ on RBG $(c,r)$ to be respectively expressed as
	
	\vspace{-0.4cm}
	
		\begin{subequations}
			\begin{align}
				\label{rate_appro}
				\tilde{f}_{m, k}^{c,r} &\approx  b_{m,k}^{c,r}\log(\gamma_{m,k}^{c,r}), k \in \Kc_m,\\
				\tilde{f}_{i}^{c,r} &\approx  b_{m,i}^{c,r}\log(\gamma_{i}^{c,r}), i \in \Ic.\label{rate_appro1}
			\end{align}
		\end{subequations}
	\vspace{-0.4cm}
	
	However, unlike the NJT-UE rate in \eqref{rate_appro}, the JT-UE rate \eqref{rate_appro1} exhibits coupling across multiple O-RUs, posing a challenge for distributed scheduling. To overcome this, we construct a separable lower bound for \(\tilde{f}_{i}^{c,r} \) using the decomposition technique in \cite{bian2024decentralizing}. 
	Specifically, we have  
	
	\vspace{-0.4cm}
	        \begin{subequations}\label{eq:gamma_JT_s}
			\begin{align} 
				 &\tilde{f}^{c,r}_{i}
				\stackrel{(a)}{\geq} \ b_{m,i}^{c,r}\log\left(\sum\nolimits_{m \in \mathcal{B}_i}{\gamma}_{m,i}^{c,r}\right),\\
				\stackrel{(b)}{\geq} &\ \sum\nolimits_{m\in \mathcal{B}_i}{b_{m,i}^{c,r}\log\left(|\mathcal{B}_i|\gamma_{m,i}^{c,r}\right)}/{|\mathcal{B}_i|}\triangleq \sum\nolimits_{m\in \mathcal{B}_i}\tilde{f}_{m,i}^{c,r}, \label{JT_rate lb}
			\end{align}
		\end{subequations}

	\noindent where each \( \gamma_{m,i}^{c,r} \) shares the same structural form as \eqref{SINR_BF}. Inequality (a) holds by the Cauchy–Schwarz inequality since all terms in the numerator of \eqref{JT SINR} are positive, while (b) follows from Jensen’s inequality.

	Comparing \eqref{JT_rate lb} with \eqref{realrate11} for JT-UEs (and \eqref{rate_appro} with \eqref{realrate} for NJT-UEs), one can observe that the coupling in the original rate expressions has been effectively alleviated.
    To streamline subsequent analysis, we unify the rate expressions of NJT-UEs in \eqref{rate_appro} and sub-terms within the rate expressions of JT-UEs in  \eqref{JT_rate lb} into a compact form as follows 
	
	\vspace{-0.4cm}
	
		\begin{align}\label{rate_approx1}
			\tilde{f}_{m,t}^{c,r}= \begin{cases}b_{m,t}^{c,r}\log(\gamma_{m,t}^{c,r}), & \text{if UE $t\in\Kc_m$,} \\ {b_{m,t}^{c,r}\log\left(|\mathcal{B}_t|\gamma_{m,t}^{c,r}\right)}/{|\mathcal{B}_t|}, & \text{if UE $t\in\Uc_m$.}\end{cases}
		\end{align}
	\vspace{-0.5cm}
	
	\subsection{\bf{Phase 2}}
	So far, the rate expression $\{\tilde{f}_{m, t}^{c,r}\}$ appears to depend only on scheduling variables and TBFs of O-RU $m$ over RBG $(c,r)$. However, the TBFs in \eqref{SINR_BF} are still implicitly coupled with the scheduling decisions within the cell, as their structure depends on the set of concurrently scheduled users. Moreover, ${\mathbf{\hat{w}}^{c,r}_{m,t}}$ exists in a fractional structure as in \eqref{bF}, making the optimization of problem \eqref{SRmax-original} still challenging.  
	To address these challenges, it is essential to characterize how scheduling decisions influence the achievable rates through the TBF structure. Toward this goal, we establish the following theorem by leveraging massive MIMO asymptotics \cite{vershynin2009high} and the structure of EZF-TBF. 
	
	\vspace{-0.1cm}
	\begin{Theorem} \label{Th:1}
		When the number of transmit antennas is sufficiently large, the rate for NJT-UE $k$ served by cell $m$ associated with RBG $(c,r)$ converges to
		
		\vspace{-0.3cm}
		
			\begin{align}\label{JTSINR1}
				\tilde{f}_{m,k}^{c,r} \approx b_{m,k}^{c,r}(\psi_{m, k}^{c,r} +\sum\nolimits_{j \in \mathcal{A}_m\setminus\{k\}} \! b_{ m,j}^{c,r} d_{m, j, k}^{c,r} - g_m^{c,r} ),  
			\end{align} 
		\vspace{-0.3cm}
		
		\noindent where ${\psi}_{m,k}^{c,r}  \triangleq       \log({{({\lambda}}_{k}^{c,r})^2|\mathbf{v}_{m,k}^{c,r} |^2 P}/{\sigma^2})$, ${d}_{m,j,k}^{c,r} \triangleq  \log( 1 -  \eta_{m,j,k}^{c,r})$, $\eta_{m,j,k}^{c,r} \triangleq { | (\mathbf{v}_{m,j}^{c,r}) ^\mathrm{H} \mathbf{v}_{m,k}^{c,r} |^2}/({|\mathbf{v}_{m,j}^{c,r}|^2|\mathbf{v}_{m,k}^{c,r}|^2})$, $g_m^{c,r} \triangleq \log(\phi_m^{c,r}) $ and $\phi_m^{c,r} \triangleq \sum\nolimits_{t\in \Ac_m}b^{c,r}_{m,t} $.
	\end{Theorem}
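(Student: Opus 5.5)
The plan starts from the expression for the SINR in \eqref{SINR_BF}, $\gamma_{m,k}^{c,r}=(\lambda_k^{c,r})^2P/(\phi_m^{c,r}\|\hat{\mathbf{w}}_{m,k}^{c,r}\|^2\sigma^2)$. Taking logarithms in \eqref{rate_approx1} gives
\begin{align*}
\tilde f_{m,k}^{c,r}=b_{m,k}^{c,r}\big(\log((\lambda_k^{c,r})^2P/\sigma^2)-\log\phi_m^{c,r}-\log\|\hat{\mathbf{w}}_{m,k}^{c,r}\|^2\big).
\end{align*}
The term $-\log\phi_m^{c,r}=-g_m^{c,r}$ already appears in \eqref{JTSINR1}. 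Everything therefore reduces to an asymptotic expression for $\log\|\hat{\mathbf{w}}_{m,k}^{c,r}\|^2$ as an additive function of the scheduling variables.

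By \eqref{bF}, $\|\hat{\mathbf{w}}_{m,k}^{c,r}\|^2=[((\hat{\mathbf{V}}_m^{c,r})^{\mathrm H}\hat{\mathbf{V}}_m^{c,r})^{-1}]_{kk}$. I will use the standard projection identity for a Gram matrix: if $\mathbf{P}_{-k}^{\perp}$ is the orthogonal projector onto the complement of the span of $\{\mathbf{v}_{m,j}^{c,r}\}_{j\in\mathcal{A}_m^{c,r}\setminus\{k\}}$, then
\begin{align*}
\|\hat{\mathbf{w}}_{m,k}^{c,r}\|^{-2}=\|\mathbf{P}_{-k}^{\perp}\mathbf{v}_{m,k}^{c,r}\|^2 .
\end{align*}
This follows from Schur complements. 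Hence $-\log\|\hat{\mathbf{w}}_{m,k}^{c,r}\|^2=\log|\mathbf{v}_{m,k}^{c,r}|^2+\log(\|\mathbf{P}_{-k}^{\perp}\mathbf{v}_{m,k}^{c,r}\|^2/|\mathbf{v}_{m,k}^{c,r}|^2)$. The first summand combines with the earlier constant to give exactly $\psi_{m,k}^{c,r}$. (The norm is kept explicit because the statement keeps it, although for unit singular vectors it equals one.)

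The key step is to show that for large $N_t$,
\begin{align*}
\frac{\|\mathbf{P}_{-k}^{\perp}\mathbf{v}_{m,k}\|^2}{|\mathbf{v}_{m,k}|^2}\approx\prod_{j\neq k}(1-\eta_{m,j,k}),
\end{align*}
so that its logarithm becomes $\sum_j b_{m,j}^{c,r}d_{m,j,k}^{c,r}$. For this I will invoke the concentration results of \cite{vershynin2009high}. The normalized channel directions of distinct users are nearly orthogonal when $N_t\to\infty$, with pairwise inner products of order $N_t^{-1/2}$. Consequently the Gram matrix of the other users' normalized directions is $\mathbf{I}+\mathbf{E}$ with $\|\mathbf{E}\|\to0$.

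I will then expand
\begin{align*}
\|\mathbf{P}_{-k}\mathbf{v}_{m,k}\|^2/|\mathbf{v}_{m,k}|^2=\mathbf{a}^{\mathrm H}(\mathbf{I}+\mathbf{E})^{-1}\mathbf{a}=\sum_j\eta_{m,j,k}+o\Big(\sum_j\eta_{m,j,k}\Big),
\end{align*}
where $\mathbf{a}$ collects the normalized correlations. The cross terms in $\mathbf{E}$ contribute only higher order. The desired quantity is then $1-\sum_j\eta_{m,j,k}$, which to the same order equals $\prod_j(1-\eta_{m,j,k})$, because the products $\eta\eta'$ are of higher order. Equivalently, I can verify the approximation by an inductive Gram–Schmidt argument, adding interferers one at a time. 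Each new $\mathbf{v}_{m,j}$ is asymptotically orthogonal to the previously added ones, so each step multiplies the residual energy by $(1-\eta_{m,j,k})$.

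The main obstacle is controlling this error. The claim needs the relative error in the logarithm to vanish, not just the absolute error. I will first try to show that the residual terms are $O(N_t^{-1})$ uniformly, given that the number of scheduled users is bounded by a constant. Finally, multiplying through by $b_{m,k}^{c,r}$, and noting that the sum over $j\in\mathcal{A}_m\setminus\{k\}$ weighted by $b_{m,j}^{c,r}$ is exactly the sum over co-scheduled users, yields \eqref{JTSINR1}.
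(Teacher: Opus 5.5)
Your proposal is correct and follows essentially the paper's route: the block-inversion (Schur complement) identity $1/\|\hat{\mathbf{w}}_{m,k}\|^2=\mathbf{v}_{m,k}^{\mathrm H}(\mathbf{I}-\hat{\mathbf{V}}_{m,\bar j}(\hat{\mathbf{V}}_{m,\bar j}^{\mathrm H}\hat{\mathbf{V}}_{m,\bar j})^{-1}\hat{\mathbf{V}}_{m,\bar j}^{\mathrm H})\mathbf{v}_{m,k}$, then asymptotic diagonality of the Gram matrix from massive-MIMO concentration to get $|\mathbf{v}_{m,k}|^2(1-\sum_z b_{m,z}\eta_{m,z,k})$, and finally $1-\sum_i x_i\approx\prod_i(1-x_i)$. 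Your $\mathbf{a}^{\mathrm H}(\mathbf{I}+\mathbf{E})^{-1}\mathbf{a}$ expansion makes explicit the error bookkeeping that the paper leaves implicit; note that it actually yields a residual of order $N_t^{-3/2}$. A residual of only $O(N_t^{-1})$ would not suffice, because the $d_{m,j,k}^{c,r}$ terms are themselves only $O(N_t^{-1})$.
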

	\begin{proof}
	As the SINR derivations across different RBGs follow the same procedure, we drop the superscript $(c,r)$ in the following for notational simplicity.
    
	First, from the EZF-TBF construction in \eqref{bF}, it holds that $\hat{\mathbf{W}}_m^\mathrm{H} \hat{\mathbf{W}}_m = (  \hat{\mathbf{V}}_m ^\mathrm{H} \hat{\mathbf{V}}_m )^{-1}$. W.l.o.g., assume that the TBF $\hat \wb_{m,k}$ of UE $k$ occipies the $j$-th column of $\hat{\mathbf{W}}_m$ in \eqref{bF}, then
	
	\vspace{-0.2cm}
	
		\begin{equation}\label{eq:ww}
			\|\hat{\mathbf{w}}_{m,k} \|_2^2 = (  \hat{\mathbf{V}}_m ^\mathrm{H} \hat{\mathbf{V}}_m)_{j,j}^{-1}.
		\end{equation}
	\vspace{-0.5cm}
	
	\noindent Denote  $\mathbf{\hat{V}}_{m,\bar{j}}$ as the sub-matrix of $\mathbf{\hat{V}}_m$ without the $j$-th column $\mathbf{v}_{m,k}$. Applying the block matrix inversion formula \cite{li2015multicell} yields 

    \vspace{-0.3cm}
	
		\begin{align} \label{eq:invV}
			\!{ (\hat{\mathbf{V}}_m ^\mathrm{H} \hat{\mathbf{V}}_m)_{j,j}^{-1}} \!=\! \frac{1}{ \mathbf{v}_{m,k} ^\mathrm{H}(\mathbf{I}-  \hat{\mathbf{V}}_{m,\bar{j}} ( \hat{\mathbf{V}}_{m,\bar{j}}^\mathrm{H} \hat{\mathbf{V}}_{m,\bar{j}} )^{-1} \hat{\mathbf{V}}_{m,\bar{j}} ^\mathrm{H}) \mathbf{v}_{m,k}}. 
		\end{align}
	\vspace{-0.3cm}
    
	%
	
	\noindent Combining \eqref{eq:invV} with \eqref{eq:ww} leads to
	
	\vspace{-0.3cm}
				\begin{align}\label{eq:inv_ww}
			\!\!{1}/{\|\hat{\mathbf{w}}_{m,k} \|_2^2} =   \mathbf{v}_{m,k} ^\mathrm{H}(\mathbf{I}-  \hat{\mathbf{V}}_{m,\bar{j}} ( \hat{\mathbf{V}}_{m,\bar{j}}^\mathrm{H} \hat{\mathbf{V}}_{m,\bar{j}} )^{-1} \hat{\mathbf{V}}_{m,\bar{j}} ^\mathrm{H}) \mathbf{v}_{m,k}.
		\end{align}
	\vspace{-0.45cm}
	
	\noindent Further, it follows from \eqref{eqV} that the diagonal elements of $\hat{\mathbf{V}}_{m,\bar{j}}^{\mathrm{H}} \hat{\mathbf{V}}_{m,\bar{j}}$ are given by $\mathbf{v}_{m,z}^{\mathrm{H}}\mathbf{v}_{m,z},\, z \neq k$, while its off-diagonal elements represent the inner products between the right singular vectors associated with different UEs on the same RBG. According to the asymptotic properties of massive MIMO
    \cite{vershynin2009high}, as the number of transmit antennas $N_t$
    approaches infinity, the orthogonality condition $\mathbf{v}_{m,z}^{\mathrm{H}}\mathbf{v}_{m,k} \rightarrow 0$
    holds for any pair of users $z \neq k$ with different channels. Hence,
    $\hat{\mathbf{V}}_{m,\bar{j}}^{\mathrm{H}} \hat{\mathbf{V}}_{m,\bar{j}}$
    is asymptotically diagonal for large $N_t$, allowing
    \eqref{eq:inv_ww} to  be well approximated as
    
	
	\vspace{-0.3cm}
	
		\begin{align} \label{eq:app_diag}
			\!\!\!|\mathbf{v}_{m,k}|^2\bigr(1 \! - \!\!\sum\nolimits_{\substack{z \in \mathcal{A}_m, z \neq k}}  \!\!{b_{m,z} | \mathbf{v}_{m,z} ^\mathrm{H} \mathbf{v}_{m,k} |^2}\!/({|\mathbf{v}_{m,z}|^2|\mathbf{v}_{m,k}|^2})\bigr).
		\end{align}
	\vspace{-0.3cm}
	
	\noindent Moreover, since the approximation $1-\sum_{i=1}^n x_i \approx \prod_{i=1}^n\left(1-x_i\right)$ holds when $x_i \approx 0$, we have
	
	\vspace{-0.4cm}
	
		\begin{align} \label{eq:prod}
			&1  - \sum\nolimits_{ {z \in \mathcal{A}_m, z \neq k}}  {b_{m,z} | \mathbf{v}_{m,z} ^\mathrm{H} \mathbf{v}_{m,k} |^2}/({|\mathbf{v}_{m,z}|^2|\mathbf{v}_{m,k}|^2})  \\
			&\approx \prod\nolimits_{ {z \in \mathcal{A}_m, z \neq k}} \left( 1 -  {b_{m,z} | \mathbf{v}_{m,z} ^\mathrm{H} \mathbf{v}_{m,k} |^2}/({|\mathbf{v}_{m,z}|^2|\mathbf{v}_{m,k}|^2} )\right)  \notag.
		\end{align}
	\vspace{-0.4cm}
	
	\noindent Substituting \eqref{eq:prod} and \eqref{eq:app_diag}  into \eqref{SINR_BF} obtains the \eqref{JTSINR1}.
	\end{proof}\vspace{-0.2cm}
	
	\noindent Compared to \eqref{rate_appro}, \eqref{JTSINR1} successfully eliminates the intermediate TBF variables and depends solely on the scheduling variables within serving cell $m$. Moreover, it reveals the interplay among different UEs' scheduling decisions. From \eqref{JTSINR1}, we observe that the rate is influenced by the following three components:
	\begin{enumerate}
		\item ${\psi}_{m,k}^{c,r}$, corresponds to the rate contribution from the SNR of UE $k$ when scheduled exclusively on RBG $(c,r)$;
		
		\item $\sum\nolimits_{j \in \mathcal{A}_m\setminus\{k\}} \! b_{ m,j}^{c,r} d_{m, j, k}^{c,r}$, where $d_{m, j, k}^{c,r}$ is negative, quantifies the loss due to interference from other UEs with correlated channels (i.e., co-channel interference);
		
		\item $g_{m}^{c,r}$, captures the impact of power sharing among co-scheduled UEs.
	\end{enumerate}
    Analogously, by applying Theorem \ref{Th:1} to the first term in \eqref{JT_rate lb}, we have for the JT-UE $i \in \mathcal{U}_m$ that 
	
	\vspace{-0.25cm}
	
		\begin{align} \label{rate1}
			\!\!\tilde{f}_{m,i}^{c,r} \approx b_{m,i}^{c,r}(\tilde{\psi}_{m, i}^{c,r}\! +\!\sum\nolimits_{j \in \mathcal{A}_m\setminus\{i\}} \! b_{m,j}^{c,r} d_{m, j, i}^{c,r}\! - \!g_m^{c,r} )/|\mathcal{B}_i|,  
		\end{align}
	\vspace{-0.4cm}
	
	\noindent where $\tilde{\psi}_{m,i}^{c,r} = \log(|\mathcal{B}_i|)+{\psi}_{m,i}^{c,r}$.
	
	Following the above two-phase reformulation, one can observe that the original multiplicative structures in \eqref{gamma} and \eqref{gamma_JT} are transformed into a more tractable summation form.
	
	\vspace{-0.1cm}
	\section{Proposed Centralized Algorithm}
	\vspace{-0.05cm}
		In this section, we first develop a centralized algorithm for problem \eqref{SRmax-original} as a benchmark. By applying our proposed reformulation scheme in \eqref{JT_rate lb}, \eqref{SRmax-original} can be approximated to 
	
	\vspace{-0.3cm}
	
		\begin{subequations}\label{SRmax2}
			\begin{align}
				\max_{\{b^{c,r}_{m,k}\}}  & \sum_{(c,r)}\sum_{m \in \mathcal{M}}\left( \sum\nolimits_{k \in \mathcal{\hat{K}}_m^{Q}} \tilde{f}_{m, k}^{c,r}+\sum\nolimits_{i \in \mathcal{\hat{U}}_m^Q}  \tilde{f}_{m,i}^{c,r}\right) \label{SRmax1_a}\\
				\text { s.t. } & b_{m, k}^{c,r} \in \{0,1\}, \forall m, c,r, k, \label{Binary1}\\
				&   b_{m,i}^{c,r} =  b_{n,i}^{c,r} , \forall  m,n\in \mathcal{B}_i ,  \forall i \in \mathcal{I}, \forall c,r,\label{JT1}\\
				& \sum\nolimits_{(c,r)} \tilde{f}_{m, k}^{c,r} \geq Q_{k},  \quad \forall m, k  \in \mathcal{K}_m^{ {Q}}, \label{KNF}\\
				&\sum\nolimits_{(c,r)} \sum\nolimits_{m\in \mathcal{B}_i}\tilde{f}_{m,i}^{c,r} \geq Q_i, \quad \forall i \in \mathcal{I}^{ {Q}}. \label{INF}
			\end{align}
		\end{subequations}

	\noindent Compared with \eqref{SRmax-original}, problem \eqref{SRmax2} no longer has the intermediate TBF variables.  
	However, as shown in \eqref{JTSINR1} and \eqref{rate1}, the rate expression $\{\tilde{f}_{m,t}^{c,r}\}$ in \eqref{SRmax1_a} remains non-convex, while the non-convex constraints in  \eqref{JT1}-\eqref{INF} further exacerbate the optimization challenge. To address this, we first employ a \textit{min}-based penalty function \cite{nocedal1999numerical} to the rate constraints \eqref{KNF} and \eqref{INF}, and reformulate problem \eqref{SRmax2} as
	
	\vspace{-0.35cm}
	
    \begin{subequations}\label{gloabl}
        \begin{align} \label{minpenalty}
			\max\nolimits_{\{b^{c,r}_{m,k}\}}\ &G\\
				\text { s.t. } & b_{m, k}^{c,r} \in \{0,1\}, \forall m, c,r, k, \label{Binary2}\\
				&   b_{m,i}^{c,r} =  b_{n,i}^{c,r} , \forall  m,n\in \mathcal{B}_i ,  \forall i \in \mathcal{I}, \forall c,r,\label{JT2}
		\end{align}
    \end{subequations}

	\noindent where
	
	\vspace{-0.3cm}
	
		\begin{align} \label{G}
			&G  \triangleq \!\sum_{m \in \mathcal{M}}\bigg( \! \sum_{k \in \mathcal{\hat{K}}_m^Q} \!\sum_{(c,r)} \! \tilde{f}_{m, k}^{c,r} \! + \! \rho \! \sum_{k \in \mathcal{K}_m^{Q}} \!\!\min \!\bigg\{ \sum_{(c,r)} \tilde{f}_{m, k}^{c,r}, Q_{ k}\bigg\}\!\bigg) \notag\\
			&\!\! + \!\sum_{i \in\mathcal{\hat{I}}^Q} \!\sum_{m \in \mathcal{B}_i} \! \sum_{(c,r)} \! \tilde{f}_{m, i}^{c,r}  \! + \! \rho \! \sum_{i \in \mathcal{I}^{Q}} \! \min \bigg\{\sum_{m \in \mathcal{B}_i} \! \sum_{(c,r)} \tilde{f}_{m, i}^{c,r}, Q_i\bigg\}, 
		\end{align}

	\noindent and $\rho > 0$ is a penalty parameter. For the above transformation, the following theorem \cite{nocedal1999numerical} holds true.
	\vspace{-0.1cm}
	\begin{Theorem} 
		There exists an $\rho^*>0$, when $\rho>\rho^* $, the local optimal solution of the above problem \eqref{gloabl} is also the local optimal solution of the original constraint problem \eqref{SRmax2}. 
	\end{Theorem}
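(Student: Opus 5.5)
The plan is to exploit the fact that the feasible set of problem \eqref{gloabl} is \emph{finite}: the variables $\{b_{m,k}^{c,r}\}$ are binary and the consistency constraint \eqref{JT2} is kept explicitly. The classical exact-penalty argument for continuous problems \cite{nocedal1999numerical} then collapses to a purely combinatorial comparison. Let $\mathcal{S}$ be the finite set of binary vectors satisfying \eqref{Binary2}--\eqref{JT2}. For $\mathbf{b}\in\mathcal{S}$, let $F(\mathbf{b})$ denote the objective \eqref{SRmax1_a} of \eqref{SRmax2}. Rewrite each penalty term through $\min\{x,Q\}=Q-\max\{Q-x,0\}$. This gives
\begin{align*}
G(\mathbf{b}) = F(\mathbf{b}) + \rho\, \bar{Q} - \rho\, V(\mathbf{b}),
\end{align*}
where $\bar{Q}\triangleq\sum_{m}\sum_{k\in\mathcal{K}_m^Q}Q_k+\sum_{i\in\mathcal{I}^Q}Q_i$ is a constant. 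The quantity $V(\mathbf{b})\ge 0$ is the total QoS violation, i.e., the sum of $\max\{Q_k-\sum_{(c,r)}\tilde f_{m,k}^{c,r},0\}$ and $\max\{Q_i-\sum_{m\in\mathcal{B}_i}\sum_{(c,r)}\tilde f_{m,i}^{c,r},0\}$. By construction, $V(\mathbf{b})=0$ if and only if $\mathbf{b}$ satisfies \eqref{KNF}--\eqref{INF}, i.e., is feasible for \eqref{SRmax2}.

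Next I define the two constants that determine $\rho^*$. Because $\mathcal{S}$ is finite, the range $\Delta\triangleq\max_{\mathcal{S}}F-\min_{\mathcal{S}}F$ is finite. The smallest nonzero violation $\delta\triangleq\min\{V(\mathbf{b}):\mathbf{b}\in\mathcal{S},V(\mathbf{b})>0\}$ is strictly positive, and I set $\delta=+\infty$ if every point of $\mathcal{S}$ is feasible. I take $\rho^*\triangleq\Delta/\delta$. The rates $\tilde f$ in \eqref{JTSINR1} and \eqref{rate1} are finite for every binary pattern, which needs $\eta_{m,j,k}^{c,r}<1$, i.e., distinct channel directions. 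Under this condition both quantities are well defined. The key inequality is then immediate: for $\rho>\rho^*$, any infeasible $\mathbf{b}$ and any feasible $\mathbf{b}'$ satisfy
\begin{align*}
G(\mathbf{b})-G(\mathbf{b}') = F(\mathbf{b})-F(\mathbf{b}')-\rho V(\mathbf{b}) \le \Delta-\rho\delta<0 .
\end{align*}
In words, every feasible point strictly beats every infeasible point under $G$. On the feasible set, moreover, $G$ equals $F$ up to the additive constant $\rho\bar Q$.

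Local optimality is understood with respect to a neighborhood structure $\mathcal{N}(\mathbf{b})\subseteq\mathcal{S}$, for instance all points within Hamming distance one that still satisfy \eqref{JT2}, matching the coordinate updates used later. Let $\mathbf{b}$ be a local maximizer of $G$ over $\mathcal{N}(\mathbf{b})$. If $\mathcal{N}(\mathbf{b})$ contains a feasible point, then $\mathbf{b}$ itself must be feasible, since otherwise the key inequality contradicts local maximality. Since $G=F+\rho\bar Q$ on feasible points, $G(\mathbf{b})\ge G(\mathbf{b}')$ for all feasible $\mathbf{b}'\in\mathcal{N}(\mathbf{b})$ translates into $F(\mathbf{b})\ge F(\mathbf{b}')$. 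Hence $\mathbf{b}$ is a local optimum of \eqref{SRmax2}. The same argument applied globally shows that the global maximizers coincide whenever \eqref{SRmax2} is feasible.

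The main obstacle is the degenerate case of a local maximizer of $G$ whose neighborhood contains no feasible point at all; no finite $\rho$ can rule this out. I would handle it by stating the result under the standing assumptions that \eqref{SRmax2} is feasible and that local optimality is taken relative to neighborhoods meeting the feasible set. This is exactly the regime of the exact-penalty theorem in \cite{nocedal1999numerical}, where the nonsmooth $\min$-penalty plays the role of the $\ell_1$ penalty. A secondary point is to check that the rate approximations keep $F$ and $V$ finite, so that $\Delta<\infty$ and $\delta>0$ genuinely hold.
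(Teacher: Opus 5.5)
Your argument is correct, and it differs from the paper, which gives no proof beyond citing the exact-penalty theory of \cite{nocedal1999numerical}. Those results are for smooth continuous programs and tie the threshold to KKT multipliers ($\rho^*=\|\lambda^*\|_\infty$). In the direction asserted here, they only guarantee that \emph{feasible} stationary points of the penalty function are KKT points, and they explicitly allow infeasible stationary points. For binary variables, Euclidean local optimality is vacuous (every point is isolated) and multipliers do not exist, so the citation does not literally apply. Your finite-set argument is self-contained and gives an explicit threshold $\rho^*=\Delta/\delta$. It fixes a neighborhood notion that matches the coordinate moves of Algorithm~\ref{algor1}, and it also shows that the global maximizers coincide. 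It also shows what the threshold is for. Since $G=F+\rho\bar Q$ on feasible points, a \emph{feasible} local maximizer of $G$ is locally optimal for \eqref{SRmax2} for every $\rho>0$. The condition $\rho>\rho^*$ is needed only to exclude infeasible local maximizers that have a feasible neighbor.

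Your extra hypothesis is a necessary correction, not a weakness, because the statement as written fails without it. Take one cell, two RBGs, and no best-effort UEs, so $F\equiv 0$. Take two QoS UEs $k,j$ with $Q_k=Q_j=2$, standalone rates $1,3$ for $k$ and $3,1$ for $j$ on RBGs~1 and~2, and channels correlated enough that $d^{c,r}_{m,j,k}\le \log 2-2$ on both RBGs. Placing $k$ on RBG~1 and $j$ on RBG~2 gives $V=2$, every single flip gives $V\ge 3$, and the swapped placement is feasible. This infeasible point is therefore a local maximizer of $G$ for every $\rho>0$.

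A few small repairs:
\begin{itemize}
\item Set $\rho^*=\max\{\Delta/\delta,1\}$, so that $\rho^*>0$ also when $\Delta=0$ or $\delta=+\infty$.
\item Your sample neighborhood (Hamming distance one in the original variables, subject to \eqref{JT2}) contains no JT moves at all, since flipping a single $b_{m,i}^{c,r}$ breaks consistency. The neighborhood matching Algorithm~\ref{algor1} flips the global $b_i^{c,r}$, i.e., all $|\mathcal{B}_i|$ copies at once.
\item For finiteness you also need $\lambda_k^{c,r}>0$ and nonzero JT sub-vectors $\mathbf{v}_{m,i}^{c,r}$. You also need the convention $\tilde f_{m,k}^{c,r}=0$ whenever $b_{m,k}^{c,r}=0$, since otherwise $g_m^{c,r}=\log 0$ when no UE is scheduled.
\end{itemize}
None of these affects your argument, which works for any neighborhood system.
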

	
	Secondly, to maintain the scheduling consistency of JT-UEs, we introduce global variables $\{b^{c,r}_{i}\}$ into constraint \eqref{JT2} such that $b^{c,r}_{i}=b^{c,r}_{m,i},\forall m \in \mathcal{B}_i$. We propose to solve problem \eqref{gloabl} directly over binary variables in a BCD manner.
	Specifically, for each scheduling variable $b^{c,r}_{m,k}$ or $b^{c,r}_{i}$, when the remaining scheduling variables are fixed, the gain of scheduling a UE $t$ on RBG $(c,r)$ is defined as
	
	\vspace{-0.3cm}
	
		\begin{align} \label{eq:gain1}
			\!\!\!\!\text{gain}[c,r,t] \!\!=     \!\!\begin{cases}\!G(b^{c,r}_{m,t}=1) \!- \!G(b^{c,r}_{m,t}=0), & \!\!\!\!\!\!\text { if $t \in \mathcal{K}_m$}, \\ \!G(b^{c,r}_{t}=1) \!- \!G(b^{c,r}_{t}=0), & \!\!\!\!\!\!\text { if $t \in \mathcal{I}$.}\end{cases}
		\end{align}
	\vspace{-0.2cm}
	
	\noindent If the $\text{gain}[c,r,t]$ is positive, the corresponding scheduling variable will be set to 1; otherwise, it will be set to 0. By iteratively applying this process to all variables, we obtain a centralized scheduling algorithm as detailed in Algorithm \ref{algor1}. It is not difficult to verify that Algorithm \ref{algor1} will generate a non-decreasing sequence of objective values and converge to a local optimal solution \cite{jager2020blockwise}. 
    

	\begin{figure}[t]
		\vspace{-0.25cm}
		\begin{algorithm}[H]
				\caption{Proposed Centralized Scheduling Algorithm}
				\begin{algorithmic}[1] 
					\vspace{-0.2cm}\label{algor1}
					\STATE Initialize $\{b_{m, k}^{c,r}, b_{i}^{c,r}\}$; 
					\FOR{each UE $t$}
					\FOR{each RBG $(c,r)$}
					\IF{$\text{gain}[c,r,t] > 0$}
					\STATE $b_{m, t}^{c,r} = 1, t \in \mathcal{K}_m$ or $b_{t}^{c,r} = 1, t \in \mathcal{I}$; 
					\ELSE
					\STATE $b_{m, t}^{c,r} = 0, t \in \mathcal{K}_m$ or $b_{t}^{c,r} = 0, t \in \mathcal{I}$; %
					\ENDIF
					\ENDFOR
					\ENDFOR
					\STATE Repeat Steps 2-10 until maximum iterations reached. 
				\end{algorithmic}
		\end{algorithm}
		\vspace{-0.8cm}
	\end{figure}

	\vspace{-0.2cm}
	\section{Distributed Framework and Algorithm Design} \label{sec:decentralized}
	\vspace{-0.0cm}

    While effective, Algorithm 1 is a centralized scheme that can only be executed sequentially on a single computational node (e.g., a PU) and underutilizes the parallel processing resources of O-DU. As the network scales, this imposes a considerable computational burden on the single node. Therefore, it is crucial to design a distributed scheduling scheme that fully exploits the multi-PU and multi-core parallelism within the O-DU. This will not only reduce the computational load on each PU but also enable concurrent processing, thereby improving scheduling scalability in large-scale deployments.
	
	The key to distributed design is to offload the scheduling optimization to multiple cores in $\text{PU}_1 \sim \text{PU}_M$ to enable parallel processing, leaving only lightweight coordination to $\text{PU}_0$. For problem \eqref{SRmax2}, the QoS requirements in \eqref{KNF} need to be satisfied across CCs within one PU, and \eqref{INF} needs be jointly satisfied across PUs, which are strongly coupled. Besides, the scheduling consistency of JT-UE across different cells in \eqref{JT0} makes the distributed design more difficult. 	A natural candidate for handling such constraints typically employs \textit{consensus alternating direction method of multipliers} (ADMM) \cite{houska2016augmented}, which decomposes the problem by independently optimizing local JT-UE variables across distributed cores and then coordinating them via a central aggregation step. However, such consensus-based algorithms are known to often require excessive iterations to converge, incurring substantial communication overhead and latency that render them impractical for scheduling \cite{xu2025distributed}. 
	
	Considering these limitations, we instead devise a new distributed scheduling framework that not only fully utilizes the computation resources at the cores but also takes only one round of information exchange between $\text{PU}_0$ and other PUs. 
	As shown in Fig. \ref{fig:framework}, our proposed framework consists of three stages as follows, while the information exchange in the framework is illustrated in Fig. \ref{fig:Info ex}. 

	\begin{figure}[t] 
		\centering	{\includegraphics[width=0.46\textwidth]{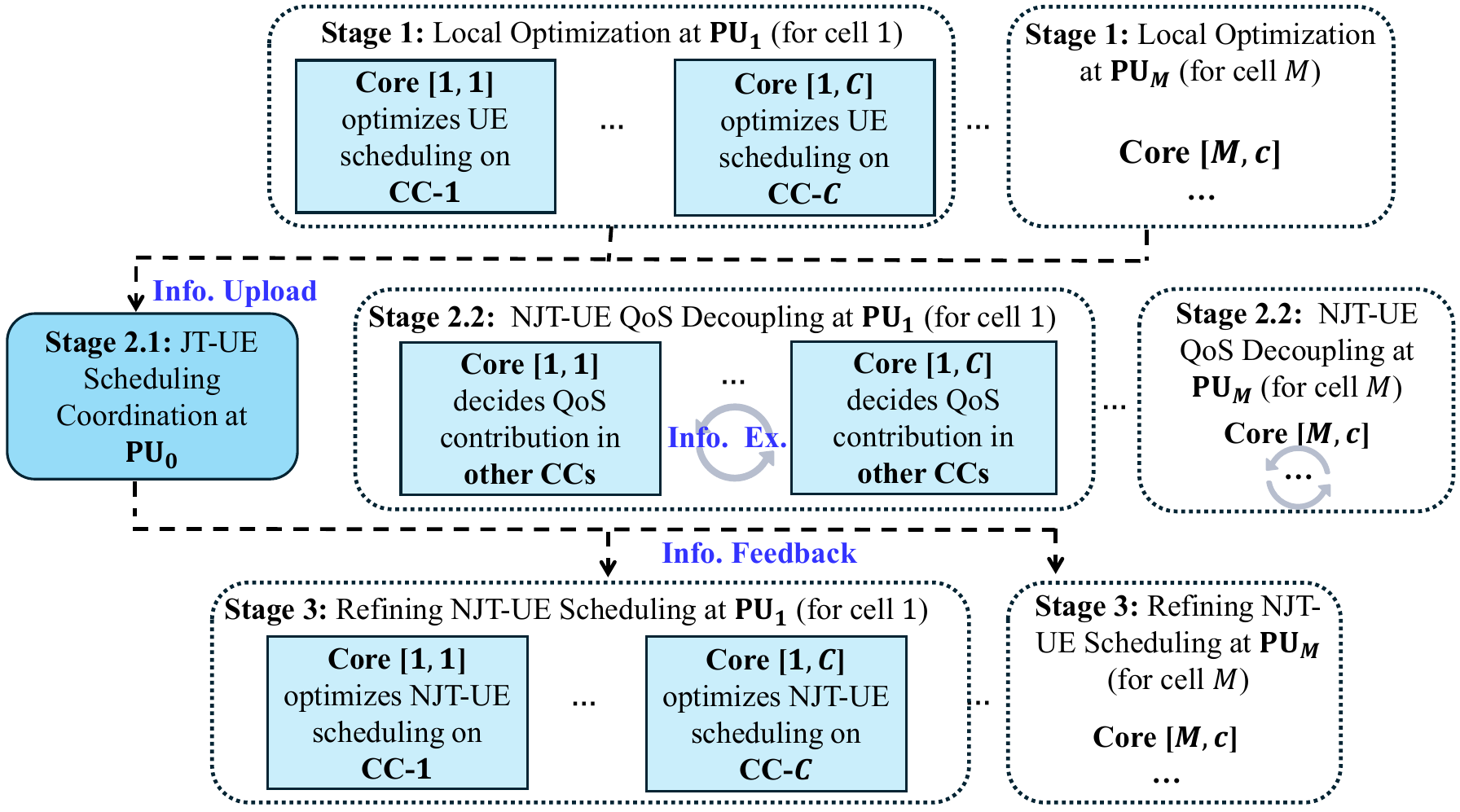}}	
		\caption{Proposed distributed scheduling framework. } \vspace{-0.5cm}
		\label{fig:framework} 
	\end{figure}  
	
	\begin{figure}[t] 
		\centering	        {\includegraphics[width=0.4\textwidth]{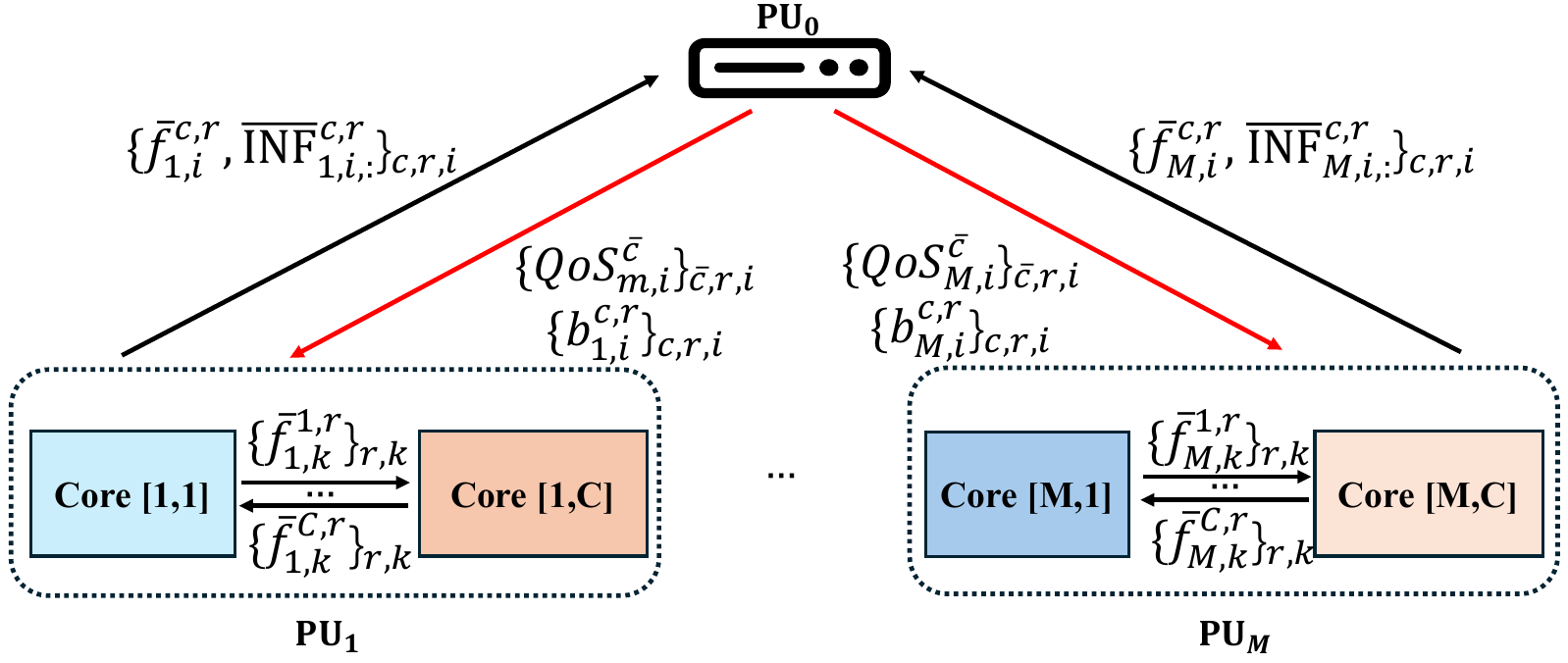}}	
		\caption{Information exchange between nodes. } \vspace{-0.3cm}
		\label{fig:Info ex} 
	\end{figure}
	
	\textbf{Stage 1: Local Optimization at $\text{PU}_1$ to $\text{PU}_M$}.  In this stage, each core within the PUs independently optimizes the scheduling variables on its corresponding CC using only local information. This serves as the initial step of distributed decision-making. The local optimization across PUs is executed in parallel. Upon completion, each PU uploads its optimized results to $\text{PU}_0$, and simultaneously, its internal cores share optimized results with each other.
	
	\textbf{Stage 2: JT-UE Scheduling Coordination at $\text{PU}_0$ and concurrent NJT-UE QoS Decoupling at $\text{PU}_1$ to $\text{PU}_M$}.
	\begin{itemize}
		\item \textit{Stage 2.1}: $\text{PU}_0$ coordinates the scheduling of JT-UEs based on the information received from  $\text{PU}_1$ to $\text{PU}_M$. By fusing the uploaded local results, $\text{PU}_0$ resolves potential inconsistencies and ensures coherent joint scheduling decisions across cells.
		\item \textit{Stage 2.2}: Meanwhile, each core within $\text{PU}_1$ to $\text{PU}_M$ uses the scheduling information exchanged from other cores within the same PU to evaluate the rate contributions of RBGs allocated on different CCs. This evaluation enables decoupling the rate constraints of NJT-UEs with QoS requirements and facilitates coordinated scheduling across multiple CCs.
	\end{itemize} 
	After the Stage 2.1 optimization, $\text{PU}_0$ distributes the updated JT-UE scheduling results to the corresponding PUs, while the refined scheduling results for NJT-UEs in Stage 2.2 are shared among the cores within each PU.
	
	\textbf{Stage 3:  Refining NJT-UE Scheduling at $\text{PU}_1$ to $\text{PU}_M$}: Leveraging the updated information from Stage 2.1 (via $\text{PU}_0$ feedback) and Stage 2.2 (QoS decoupling to each core), each core refines the scheduling decisions for its associated NJT-UEs. This stage aims to further enhance the overall network performance through localized adjustments based on the most recent inter-PU and intra-PU information.
	
	The detailed design of each stage is elaborated as follows.
	\vspace{-0.3cm}
	\subsection{Stage 1: Local Optimization at $\text{PU}_1$ to $\text{PU}_M$} 
	
	This stage is conducted at $\text{PU}_1$ to $\text{PU}_M$ in parallel. For illustration, the index $[m,c]$ is used to denote the $c$-th core at PU $m$. To optimize the scheduling over the RBGs in the $c$-th CC for the JT-UEs and NJT-UE associated with cell $m$, core $[m,c]$ needs to solve  
	
	\vspace{-0.2cm}
	
		\begin{align} \label{minpenalty_core} 
			\max\nolimits_{\{b_{m, k}^{c,r}\}_{r, k \in \mathcal{A}_m}}\ G_{m}^c \ \ \  \text { s.t. } \eqref{Binary},
		\end{align}

	\noindent where
	
	\vspace{-0.4cm}
	
		\begin{align}\label{Gms} 
			G_{m}^c \! =\! & \sum_{r}\sum_{k \in (\mathcal{\hat{K}}_m^{Q}\cup\mathcal{\hat{U}}_m^Q)}  \!\!\tilde{f}_{m, k}^{c,r}  +\rho \!\!\!\!\!\! \sum_{z \in (\mathcal{{K}}_m^{Q}\cup\mathcal{{U}}_m^Q)} \!\!\!\!\min \bigg\{\sum_{r} \tilde{f}_{m, z}^{c,r}, Q_z\bigg\}.
		\end{align}

	\noindent Problem \eqref{minpenalty_core} is constructed as a subproblem of the original formulation \eqref{minpenalty} from the local viewpoint of core $[m,c]$ based on its available information. As such, it addresses QoS demands using only local resources without accounting for JT coordination or inter-CC QoS collaboration. Therefore, directly optimizing \eqref{minpenalty_core} could result in system-level allocations that exceed the actual QoS requirements in \eqref{KNF} and \eqref{INF}. To avoid it, we introduce a heuristic selection mechanism with two evaluation metrics. 
    
	First,  like \eqref{eq:gain1}, the gain of scheduling UE $z \in \Ac_m$ on RBG $(c,r)$  $\text{gain}[c,r,m,z]$ is defined as
	
	\vspace{-0.3cm}
	
		\begin{align}
			\!\!\!\!\text{gain}[c,r,m,z] =     G_m^c(b^{c,r}_{m,z}=1) - G_m^c(b^{c,r}_{m,z}=0).
		\end{align} 
	\vspace{-0.4cm}
	
	\noindent Second, the rate increment of assigning RBG $(c,r)$ to UE $z\in\Ac_m$, i.e., $\tilde{f}^{c,r}_{m,z}$, is evaluated. Based on these two metrics $\{\text{gain}[c,r,m,z], \tilde{f}^{c,r}_{m,z}\}$, PU $m$ filters out RBGs with marginal rate contributions, retaining only those that provide significant gains for scheduling. That is, the values of $\{b_{m,z}^{c,r}\}$ are determined by the following criterion
	
	\vspace{-0.35cm}
	
		\begin{align} \label{eq:distr_s1}
			\!\!\! b_{m,z}^{c,r} =
			\begin{cases}
				1 & \text{if } \text{gain}[c,r,m,z] > 0  \ \& \ {\tilde{f}^{c,r}_{m,z}}/{\tilde{f}^{c,\max}_{m,z}} > \alpha \\
				0 & \text{otherwise},
			\end{cases}
		\end{align}

	\noindent where $\tilde{f}^{c,\max}_{m,z} \triangleq \max_r \{\tilde{f}^{c,r}_{m,z}\}$ is the maximum achievable rate of user $z$ among RBGs of CC-$c$, and $\alpha \in [0,1)$ is a threshold. Here, we choose a relative threshold $\alpha$ rather than an absolute one. This is because achievable rates may differ significantly across CCs, and an absolute threshold easily leads to either overly conservative or overly aggressive RBG allocations on certain CCs.  
	By iteratively applying \eqref{eq:distr_s1} to all UEs, the procedure effectively eliminates the scheduling of the RBGs that only bring marginal contribution to UE's rate improvement and avoid excessive RBG allocation. Accordingly, the scheduling optimization at core $[m,c]$ is detailed in Algorithm \ref{algor2}.
	
	With the outputs of Algorithm \ref{algor2}, each core at $\text{PU}_1$ to $\text{PU}_M$ computes the necessary information for coordination in subsequent stages. Specifically, each core $[m,c]$ first evaluates the achievable rates of its associated JT-UEs $\{\bar f_{m,i}^{c,r}\}_{\,i\in\mathcal{U}_m}$ by temporarily setting \(b_{m,i}^{c,r}=1\), irrespective of the scheduling results from Algorithm \ref{algor2}. Moreover, each core $[m,c]$ computes the interference variations caused by scheduling-state changes of its JT-UEs. When $b_{m,i}^{c,r}$ switches from 1 to 0, the values of interference variation are computed based on \eqref{JTSINR1} by 
	
	\vspace{-0.2cm}
	
		\begin{align}
			\overline{\text{INF}}_{m, i,1\rightarrow 0}^{c,r} = & -\sum\nolimits_{k \in \mathcal{A}_m} b_{m, k}^{c,r} d_{m, i, k}^{c,r} \notag \\
			&\!\!\!\!\!\!\!\!\!\!\!\!\!\!\!\!\!\!\!\!+  \sum\nolimits_{z \in \mathcal{A}_m} b_{m,z}^{c,r} \left( \log(  {\phi}_{m}^{c,r}) - \log ( {\phi}_{m}^{c,r} - 1) \right),
			\label{eq:interference}
		\end{align}

	\noindent where the first term accounts for the reduced interference to other users, and the second term reflects the gain from power redistribution. Similarly, for the switch $b_{m,i}^{c,r}: 0\rightarrow1$, the interference variation is computed as 
	
	\vspace{-0.2cm}
	
		\begin{align}
			\overline{\text{INF}}_{m, i,0\rightarrow 1}^{c,r} = & \sum\nolimits_{k \in \mathcal{A}_m} b_{m, k}^{c,r} d_{m, i, k}^{c,r} \notag \\
			&\!\!\!\!\!\!\!\!\!\!\!\!\!\!\!\!\!\!\!\!+  \sum\nolimits_{z \in \mathcal{A}_m} b_{m,z}^{c,r} \left( \log(  {\phi}_{m}^{c,r}) - \log ( {\phi}_{m}^{c,r} + 1) \right).
			\label{eq:interference1}
		\end{align}

	\noindent The values $\{\bar f_{m,i}^{c,r}\}_{\,i\in\mathcal{U}_m}$ and $\{\overline{\text{INF}}_{m, i,:}^{c,r}\}$ will be uploaded to $\text{PU}_0$ for the JT-UE scheduling coordination in Stage 2.1. Meanwhile, each core $[m,c]$ also computes the rates of its active NJT-UEs $\{\bar f_{m,k}^{c,r}| b_{m,k}^{c,r}=1 \}_{\,k\in\mathcal{K}_m}$ determined by Algorithm \ref{algor2}, and exchanges this information locally with other cores in the same PU for the QoS coordination in Stage 2.2.
	
	\begin{algorithm}[t]
		\caption{ Local Scheduling Optimization at Core $[m,c]$}
		\begin{algorithmic}[1] \label{algor2}
			\STATE Initialize $\{b_{m, k}^{c,r} \}$
			\FOR{each UE $z \in \mathcal{A}_m$}
			\FOR{each RBG $(c,r)$}
			\STATE Calculate $\text{gain}[c,r,m,z]$ and rate $\tilde{f}^{c,r}_{m,z}$;    
			\ENDFOR 
			\STATE Find $\tilde{f}^{c,max}_{m,z} = \argmax_r \{\tilde{f}^{c,r}_{m,z}\}$; 
			\FOR{each RBG $r$}
			\IF{$\text{gain}[c,r,m,z] > 0$ and $\tilde{f}^{c,r}_{m,z}/\tilde{f}^{c,max}_{m,z} > \alpha$ }
			\STATE $b_{m, z}^{c,r}=1$, 
			\ELSE
			\STATE $b_{m, z}^{c,r}=0$, 
			\ENDIF   
			\ENDFOR     
			\ENDFOR
			\STATE repeat steps 2-14 until certain conditions are met.
		\end{algorithmic}
	\end{algorithm}	
	\vspace{-0.35cm}
	\subsection{Stage 2.1: JT-UE Scheduling Coordination at $\text{PU}_0$}
	
	Since the scheduling variables of JT-UEs are optimized independently at $\text{PU}_1$ to $\text{PU}_M$ in Stage 1, the scheduling consistency is not guaranteed. To resolve the issue, it is necessary to design a fusion strategy at $\text{PU}_0$ based on the uploaded information $\{\bar{f}^{c,r}_{m,i}\}$  and $\{\overline{\text{INF}}_{m, i, :}^{c,r}\}$. The main principle is to determine a unified scheduling decision for each inconsistent JT-UE across cells based on which option delivers better overall system performance.

	Consider a representative conflict scenario: suppose JT-UE $i$ requires the joint transmission from both cell $m$ and cell $n$. If JT-UE $i$ is scheduled by cell $m$ on RBG $(c,r)$ but is \textit{not} scheduled by cell $n$ on the same RBG, it will violate the joint scheduling consistency. To resolve the conflict, our proposed strategy first quantifies the performance impact of the following alternative scheduling choices:
	
	\begin{enumerate}
		\item \textbf{Scheduling:} If JT-UE $i$ is ultimately scheduled on RBG $(c,r)$, the  gain to the overall network is: $
		\mathrm{F}_{i, 0 \rightarrow 1}^{c,r} = \bar{f}_{n, i}^{c,r} + \overline{\text{INF}}_{n, i,0\rightarrow 1}^{c,r}$, 
		where   $\overline{\text{INF}}_{n, i, 0\rightarrow 1}^{c,r}$ captures the negative impact caused by the interference that would decrease the rate of other UEs scheduled on RBG $(c,r)$.
		\item \textbf{Non-scheduling:} If UE $i$ is \textit{not} scheduled on RBG $(c,r)$, the gain is: $
		\mathrm{F}_{i, 1 \rightarrow 0}^{c,r} = -\bar{f}_{m, i}^{c,r} + \overline{\text{INF}}_{m, i,1\rightarrow 0}^{c,r}$,
		where the change of JT-UE $i$'s state on RBG $(c,r)$ from `scheduled' to `unscheduled' would decrease its transmission rate and alleviate the interference to other UEs.
	\end{enumerate}
	Based on the values $\{\mathrm{F}_{i, :}^{c,r}\}$, for JT-UEs without QoS requirements, $\text{PU}_0$ generates the scheduling decisions based on the following criterion
	\vspace{-0.1cm}
	
		\begin{align}\label{pick1}
			b_{i}^{c,r}= \begin{cases}1, &  \text{if} \  \mathrm{F}_{i, 0 \rightarrow 1}^{c,r}>\mathrm{F}_{i, 1 \rightarrow 0}^{c,r}\\0, &   \text { otherwise.}\end{cases}
		\end{align} \vspace{-0.1cm}

	For JT-UEs with QoS requirements, the above criterion cannot be adopted directly. Instead, for a JT-UE $i \in \Ic^Q$, $\text{PU}_0$ determines its scheduling state across RBGs by solving the following optimization problem to avoid the resource wastage
	
	\vspace{-0.2cm}
	
		\begin{subequations}\label{pick}
			\begin{align}
				\min\nolimits_{\left\{ {b}_i^{c,r}\right\}_{c, r}}&  \sum\nolimits_{(c,r)}  {b}_i^{c,r} \\
				\text { s.t. } & \sum\nolimits_{(c,r)}  {b}_i^{c,r}\left(\sum\nolimits_{m\in \mathcal{B}_i}\bar{f}_{m, i}^{c,r} \right) \geq Q_i, \label{pick_rbg}\\
				& \mathrm{F}_{i, 0 \rightarrow 1}^{c,r}>\mathrm{F}_{i, 1 \rightarrow 0}^{c,r}.\label{pick_gain}
			\end{align}
		\end{subequations}

	\noindent Although problem \eqref{pick} appears complex, its solution is relatively straightforward: select the RBGs that satisfy constraint \eqref{pick_gain} and sort their corresponding contributions $\sum\nolimits_{m\in \mathcal{B}_i}\bar{f}_{m, i}^{c,r}$ in a decreasing order; then, choose the top-performing RBGs among them to meet constraint \eqref{pick_rbg}. It is worth noting that the optimization for each UE can be executed in parallel, as their scheduling decisions have been fully decoupled at this stage, thereby significantly reducing the processing time. 
	After applying \eqref{pick1} and \eqref{pick} to all JT-UEs, $\text{PU}_0$ updates their achievable rates as
	
	\vspace{-0.2cm}
	
		\begin{align} \label{updated}
			\hat{f}_{m,i}^{c,r}= \begin{cases}\bar{f}_{m,i}^{c,r}, &  \text{if} \ b_{i}^{c,r} = 1, m \in \mathcal{B}_i  \\0, &   \text { otherwise.}\end{cases}
		\end{align}

	With the updated scheduling states and rates of JT-UEs, $\text{PU}_0$ further computes the QoS contribution for JT-UE $i$ from all the other cores except core $[m,c]$ as
	
	\vspace{-0.2cm}
	
		\begin{align}\label{QOS-pC}
			\!\!\!\text{QoS}_{m,i}^{\bar{c}} = \sum\nolimits_{c' \neq c}\sum\nolimits_{r \in \mathcal{R}}\hat{f}_{m, i}^{c',r} + \sum\nolimits_{\ell \in \mathcal{B}_i,\ell\neq m}\sum\nolimits_{(c,r)} \hat{f}_{\ell, i}^{c,r},
		\end{align}

	\noindent where the first term quantifies the QoS contribution from the RBGs scheduled on other CCs by cell $m$, and the second one accounts for the QoS contributed by other cells. The values $\{b_{i}^{c,r}\}$ and $\{\text{QoS}_{m,i}^{\bar{c}} \}$ are then distributed to $\text{PU}_m$ for the final refinement in Stage 3.

	\vspace{-0.3cm}
	\subsection{Stage 2.2: NJT-UE QoS Decoupling at $\text{PU}_1$ to $\text{PU}_M$} 
	When $\text{PU}_0$ conducts the scheduling coordination for JT-UEs, the cores within $\text{PU}_1$ to $\text{PU}_M$ can simultaneously process the results generated from Stage 1 to resolve the scheduling coupling across multi-CCs due to the QoS constraints for NJT-UEs.     The key idea of the decoupling is to decompose each UE’s QoS requirement into per-CC contribution similar to \eqref{QOS-pC}: by quantifying how much rate each CC should provide, the QoS requirement is distributed across CCs. 
	
	Specifically,
	with the exchanged $\{\bar f_{m,k}^{c,r}| b_{m,k}^{c,r}=1 \}_{\,k\in\mathcal{K}_m}$ from other cores, the scheduling states across multi-CCs of the NJT-UEs are optimized by solving the following problem
	
	\vspace{-0.2cm}
	
		\begin{subequations}\label{pick3}
			\begin{align} \label{NFB_UE_QoS}
				\min\nolimits_{\left\{ {b}_{m,k}^{c,r}\right\}}& \sum\nolimits_{(c,r)}  {b}_{m,k}^{c,r} \\
				\text { s.t. } & \sum\nolimits_{(c,r)}  {b}_{m,k}^{c,r}\bar{f}_{m, k}^{c,r} \geq Q_{k}. 
			\end{align}
		\end{subequations}

	\noindent Compared with \eqref{pick}, problem \eqref{pick3} does not impose the condition in \eqref{pick_gain}, since the scheduling of NJT-UEs is only related to their serving cell, without cross-cell inconsistencies. Similar to \eqref{pick}, problem \eqref{pick3} can be solved efficiently by sorting the rate contributions $\{\bar{f}_{m, k}^{c,r}\}$ and only selecting these high-contribution RBGs to meet the QoS requirement. 
	Moreover, optimizing \eqref{pick3} tends to change some RBGs to an “unscheduled” state, alleviating the possible excessive RBG allocations in Stage 1 due to the limitations of local optimization. Meanwhile, releasing the RBGs previously allocated to NJT-UEs benefits the QoS performance of co-scheduled JT-UEs, as the interference they experience is reduced.
	
	After solving \eqref{pick3}, $\text{PU}_m$ updates the achievable rates $\{\hat{f}_{m,k}^{c,r}\}_{k \in \Kc_m^Q}$ similar to \eqref{updated}. Then, the QoS contribution for an NJT-UE $k \in \Kc_m^Q$  from all the other cores except core $[m,c]$ is computed as $\text{QoS}_{m,k}^{\bar{c}} = \sum\nolimits_{c' \neq c } \sum\nolimits_{r \in \mathcal{R}} \, \hat f_{m,k}^{c',r}$, which will be used for the parallel refinement in Stage 3.
	%
	%
	
    
	\vspace{-0.3cm}
	\subsection{Stage 3: Refining NJT-UE Scheduling at $\text{PU}_1$ to $\text{PU}_M$}
	After completing Stage 2.1, $\text{PU}_0$ feeds back the updated scheduling states $\{b^{c,r}_{m,i}\}$ and the associated QoS contribution $\{\text{QoS}^{\bar{c}}_{m,i}\}$ of the JT-UEs to  $\text{PU}_m$. With the feedback, $\text{PU}_m$ further refines the scheduling of NJT-UEs associated with cell $m$. 
	Moreover, with the QoS contributions from other CCs $\{\text{QoS}_{m,k}^{\bar{c}}\}_{k \in \Kc_m^Q}$, \eqref{minpenalty} can be fully decoupled across cores, thereby enabling parallel refinement.
	Specifically, each core $[m, c]$ refines the scheduling of NJT-UEs  $k\in \Kc_m$ on CC-$c$ by solving a sub-problem of \eqref{minpenalty} as
	
    \vspace{-0.3cm}
	 
        \begin{equation} \notag
		\begin{aligned} 
		\underset{\{b_{m, k}^{c,r}\}_{r, k}}{\max}	& \! \tilde{G}_{m}^c  \! \triangleq \! \sum\nolimits_{k \in \mathcal{\hat{K}}_m^Q} \! \sum\nolimits_{r \in \mathcal{R}} \tilde{f}_{m, k}^{c,r}\! + \!\sum\nolimits_{i \in \mathcal{\hat{U}}_m^{Q}} \! \sum\nolimits_{r \in \mathcal{R}} \tilde{f}_{m, i}^{c,r}     \\[-0.5ex]
			& \!\!\!\! \!\!\!\! \!\!\!\! +  \rho \sum\nolimits_{z \in (\mathcal{K}_m^{Q}\cup\mathcal{U}_m^{Q})} \! \min \left\{ \sum\nolimits_{r \in \mathcal{R}}\tilde{f}_{m, z}^{c,r} + \text{QoS}_{m, z}^{\bar{c}}, Q_{z}\right\}. 
		\end{aligned}
        \end{equation}

	\noindent Here, we refrain from re-optimizing the JT-UE scheduling to avoid introducing inconsistencies across different cells and additional rounds of coordination to resolve them. While the adjustment of NJT-UE scheduling may affect the rates of JT-UEs, their QoS requirements can be actively reinforced with a proper penalty parameter $\rho$. Since the above problem shares a similar structure with \eqref{minpenalty}, it can be efficiently solved in a BCD manner. With this refinement, this stage further enhances the overall network performance. Eventually, a distributed scheduling scheme aligned with the O-RAN structure is established by integrating the above three stages.
	
	\vspace{-0.1cm}
	\section{Simulation Results} \label{sec:simu}
	In the simulation, we consider an O-RAN with 3 cells. The O-RUs, each equipped with $N_t = 64$ antennas unless otherwise specified, are located at \([0, -300, 25]\), \([-1000, -300, 25]\), and \([-500, -1200, 25]\) (unit: m). A total of 
	$K$ UEs (height = $1.5~\text{m}$) equipped with 
	$N_r = 4$ antennas are randomly distributed in the rectangular region 
	$[x_{min}, x_{max}, y_{min}, y_{max}] = [-1400, 400, -1400, -100]$ (unit: m). The system uses three CCs centered at 3.2~GHz, 3.5~GHz, and 3.8~GHz. Each CC comprises 624 subcarriers grouped into 13 RBGs of 48 subcarriers each. $P = 10$~dBm and the power spectral density of noise is  $-174$~dBm/Hz. Channels are generated using Quadriga  \cite{jaeckel2014quadriga}. The O-RU–UE association follows a channel-strength-based strategy \cite{tanbourgi2014analysis}, where each UE is primarily connected to the O-RU with the strongest large-scale channel gain. To enable JT, a UE is also associated with other O-RUs if the difference between their channel gains and the strongest one is within $10$ dB.
	Among the 
	$K$ UEs, $K_q \le K$ are randomly selected to have heterogeneous QoS demands, with thresholds $\{Q_k, Q_i\}$ randomly drawn from $[0, 60]~ \text{bps/Hz}$.
	We evaluate the proposed designs using two performance metrics: the effective sum rate (ESR) $G_{esr}$ defined as
	
	\vspace{-0.3cm}
	 
        \begin{equation} \notag
		\begin{aligned} 
			&G_{esr} \triangleq \sum\limits_{i \in\mathcal{\hat{I}}^Q}  \sum\limits_{(c,r)}  {f}_{ i}^{c,r}    + \sum\nolimits_{i \in \mathcal{I}^{Q}} \min \big\{ \sum\limits_{(c,r)}  {f}_{i}^{c,r}, Q_i\big\} \\
			&+\sum\limits_{m \in \mathcal{M}}\bigg(\sum\limits_{k \in \mathcal{\hat{K}}_m^Q}\sum\limits_{(c,r)}  {f}_{m, k}^{c,r} \notag +  \sum\limits_{k \in \mathcal{K}_m^{Q}} \min \big\{ \sum\limits_{(c,r)}  {f}_{m, k}^{c,r}, Q_{ k} \big\}\bigg) 
		\end{aligned}
        \end{equation}

	\noindent where $\{f_{m, k}^{c,r}, f_{ i}^{c,r}\}$ from \eqref{realrate} and  \eqref{realrate11} take the original achievable rates without approximation, and $\mathrm{Sat}$ represents the ratio of UEs whose QoS requirements are met. 
	\vspace{-0.2cm}
	\subsection{Convergence of Proposed Centralized Algorithm}
	\begin{figure}[t] 
		\centering	               {\includegraphics[width=0.35\textwidth]{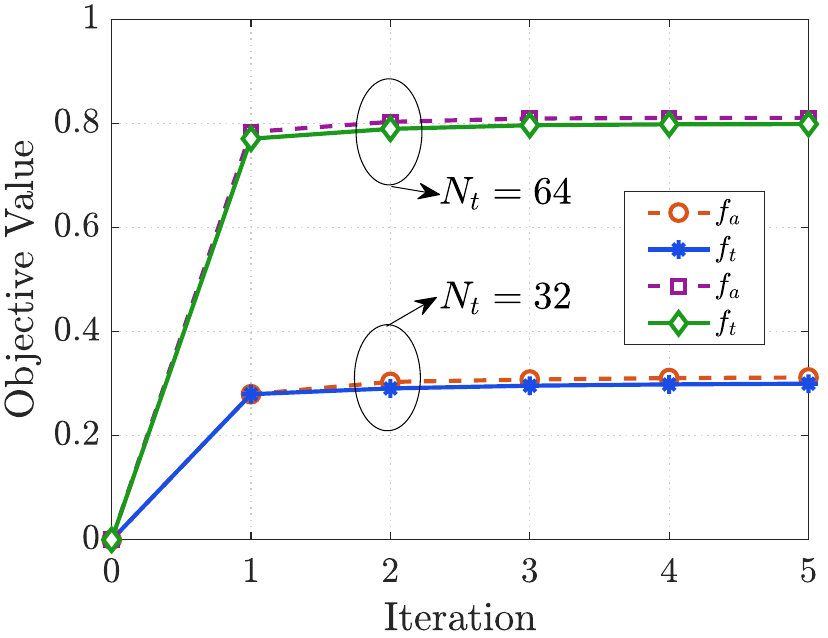}}	
		\vspace{-0.2cm}
		\caption{Convergence of the proposed centralized scheduling. } \vspace{-0.2cm}
		\label{Centralized Algorithm}
	\end{figure}
	
    Fig. \ref{Centralized Algorithm} shows the objective value of problem \eqref{minpenalty} achieved by the proposed centralized scheduling (PCS) Algorithm \ref{algor1} versus iterations, with $K=45$ and $K_q = 25$. Specifically, \( f_{{a}} \triangleq G/ (CRK) \), with $G$ in \eqref{G} with $\rho =1$, denotes the averaged approximate ESR per RBG per UE, while \( f_{{t}} \triangleq G_{{esr}} / (CRK) \) is the attained averaged ESR without approximation. Fig. \ref{Centralized Algorithm} presents that the value of  \( f_{{a}} \) increases rapidly and gets stable within about 5 iterations, indicating the fast convergence of our proposed design. Moreover, \( f_t\) closely follows that of $f_a$, with a relative error consistently below 3\% for $N_t = 32$ and 64, validating the effectiveness of our proposed approximation in Sec. \ref{sec:reformu}.

    \begin{figure*}[t]
    	\centering
    	\begin{minipage}{0.32\textwidth} 
    		\includegraphics[width=\linewidth]{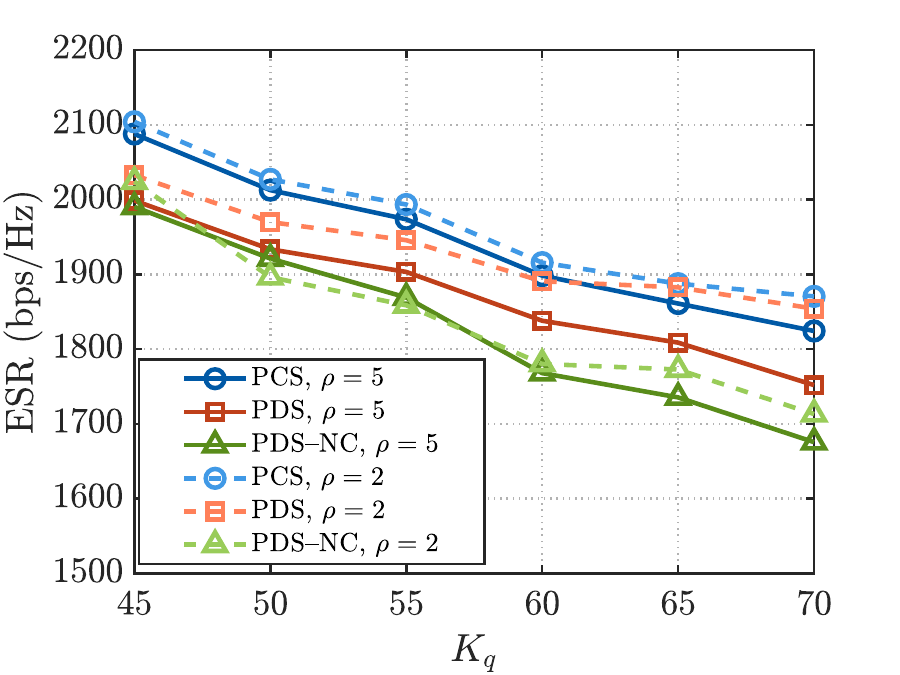}
    		\caption{ESR versus the number of QoS-constrained UEs $K_q$, with $K = 75$.} 
    		\label{fig:rho}
    	\end{minipage}
    	\hfill
    	\begin{minipage}{0.32\textwidth} 
    		\includegraphics[width=\linewidth]{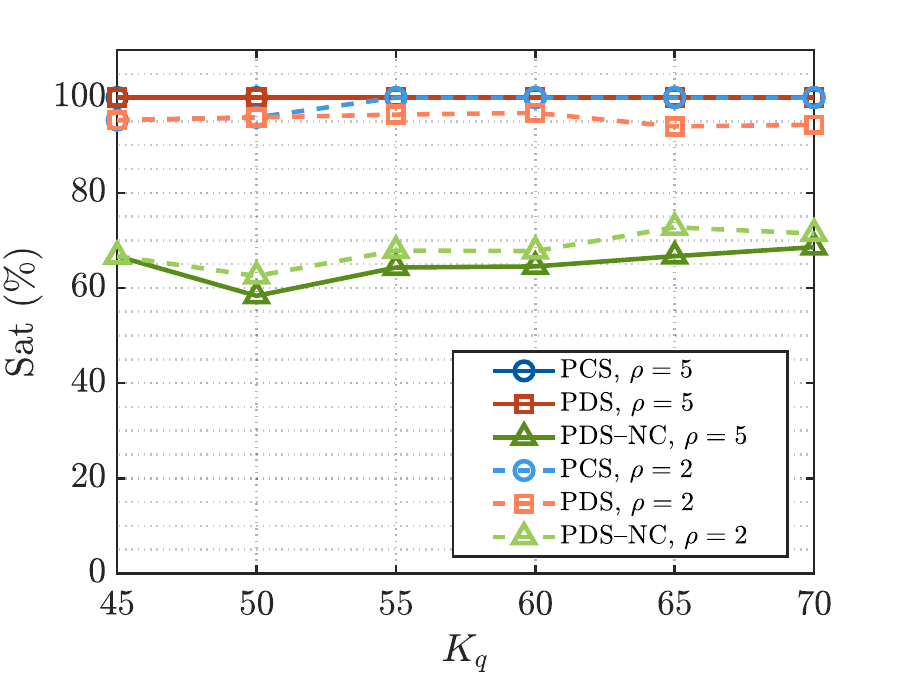} 
    		\caption{$\mathrm{Sat}$ versus the number of QoS-constrained UEs $K_q$, with $K = 75$.}
    		\label{fig:rho_1}
    	\end{minipage}
    	\hfill
    	\begin{minipage}{0.32\textwidth} 
    		\includegraphics[width=\linewidth]{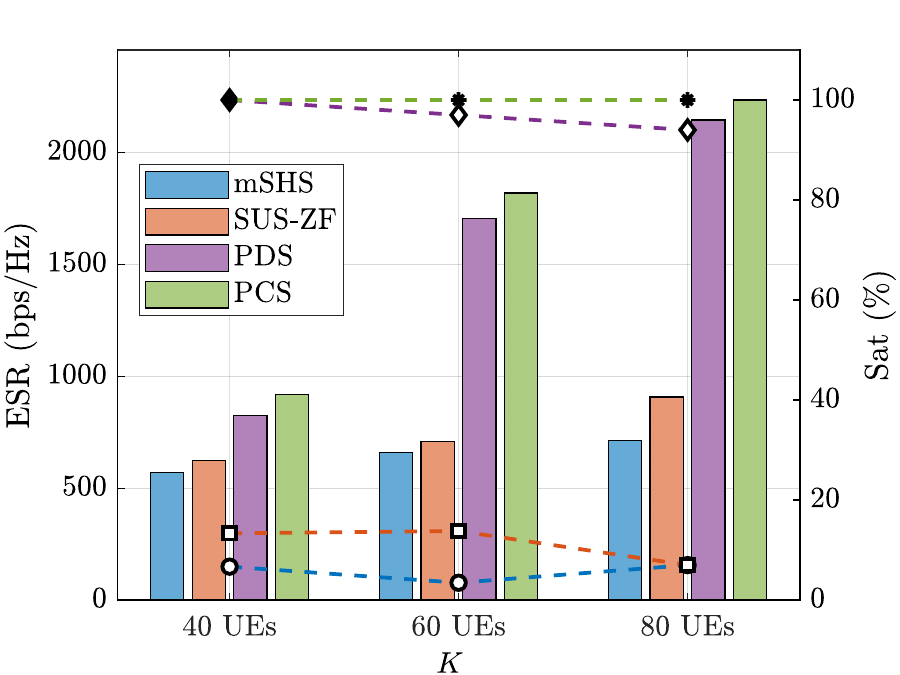} 
    		\caption{ESR (bars) and $\mathrm{Sat}$ (dashed lines) achieved by different schemes versus $K$.}
    		\label{fig:comp}
    	\end{minipage}
    	\vspace{-0.3cm}
    	\label{fig:3wide_independent}
    \end{figure*}

	\vspace{-0.2cm}
	\subsection{Performance of Proposed Scheduling Schemes}
	
	Fig.~\ref{fig:rho} and Fig.~\ref{fig:rho_1} compare the performance of our proposed distributed scheduling (PDS) with that of PCS under different penalty parameters ($\rho$) values.
    The performance of PDS without PU coordination (Stage 2.2 in Sec. \ref{sec:decentralized}-C, named as ``PDS-NC'') is also presented to evaluate the improvement enabled by this step.

	As shown in Fig.~\ref{fig:rho}, the ESR achieved by all schemes decreases as $K_q$ increases. This is expected since more spectrum resources will be allocated to meet the stringent QoS requirements. From an optimization perspective, more QoS-constrained UEs further restrict the feasible set of problem \eqref{SRmax-original}, yielding a smaller ESR. Among the three schemes, PCS performs best due to its global coordination. However, this comes at the cost of intensive computation at a single node, limiting its applicability in large-scale networks. While PDS achieves a lower ESR than PCS,  it distributes the computational load across multiple nodes with only one round of coordination, greatly improving scalability. 
    Compared with PDS, the ESR of PDS-NC gets worse. This stems from its uncoordinated scheduling: each core makes isolated RBG decisions with intra-PU coordination, often resulting in excessive resources for some UEs and insufficiency for others.

    Additionally, Fig.~\ref{fig:rho_1} shows that both PCS and PDS maintain high $\mathrm{Sat}$ values across all $\rho$ values, while PDS-NC achieves significantly lower $\mathrm{Sat}$ (below 78\% in all cases) due to its lack of PU coordination. Combined with Fig.~\ref{fig:rho}, these results demonstrate the critical role of coordination in distributed scheduling. Further, comparing $\rho = 2$ and $5$ reveals the penalty parameter's trade-off effect: smaller $\rho$ improves ESR while larger $\rho$ enhances $\mathrm{Sat}$, as a stronger penalty on QoS violations (see \eqref{G}) shifts optimization priority from throughput to constraint satisfaction. Among the three schemes, PCS shows the least sensitivity to $\rho$ variation, while PDS maintains $\mathrm{Sat}$ above 93\% even when $\rho$ decreases from 5 to 2.
    
	\begin{table}[t]
		\centering
		\caption{Computational Time (s)}
		\begin{tabular}{l|llll}\hline
			Scheme & $K= 40$ & $K=60$ & $K = 80$  \\ \hline
			PCS    & 559 & 1042 & 1450         \\
			PDS    & 18  & 40   & 64          \\ \hline
		\end{tabular}
		\label{computation1} 
	\end{table}
    
	In Table \ref{computation1}, the computational time of our proposed PCS and PDS is compared. One can observe that the computation of PDS is much faster than that of PCS. In all scenarios, the computation time of PDS is under $1/20$ of PCS's. This owes to the parallel processing enabled by our distributed design, making PDS a scalable solution for large networks.

	\vspace{-0.4cm}
	\subsection{Comparison with Other Schemes}


    In this subsection, we compare our proposed designs with two existing schemes under varying numbers of UEs $K$: SUS-ZF \cite{li2015multicell} with ZF-TBF, and a modified SINR-based heuristic scheme (\textbf{mSHS}), which modifies SHS \cite{idrees2022throughput} by weighting the user selection according to individual QoS demands. Fig. \ref{fig:comp} shows that PCS consistently achieves the highest ESR with 100\% $\mathrm{Sat}$ across all scenarios. In contrast, mSHS fails to satisfy QoS demands due to insufficient RBGs per UE; its aggressive QoS prioritization also hinders the scheduling of UEs with good channels, leading to the lowest ESR and satisfaction. SUS-ZF outperforms mSHS but falls short of our methods, especially as $K$ grows. Meanwhile, PDS closely follows the trend of PCS, achieving over 95\% of its ESR and maintaining $\mathrm{Sat}$ above 94\%, confirming the effectiveness.
	
	\vspace{-0.1cm}
	\section{Conclusion}
	\vspace{-0.1cm}
	In this work, we have investigated the distributed user scheduling problem in MU-MIMO O-RAN  serving both JT and NJT UEs with heterogeneous QoS requirements. To handle the complex rate expression involving strong variable coupling across carriers, users, and cells, we have extended the EZF-TBF to JT-UEs, and proposed a novel reformulation scheme that achieves a tractable approximate rate suitable for distributed optimization. Building upon this reformulation, in addition to a BCD-based centralized algorithm, we have proposed a three-stage `decomposition -- coordination -- refinement' distributed scheduling scheme fully tailored to the O-RAN architecture. This distributed scheme only takes a single round of information exchange between nodes, making it suitable for practical latency- and fronthaul-constrained deployments. Simulation results demonstrate that the proposed distributed scheme achieves performance comparable to the centralized benchmark while significantly outperforming existing designs, providing a scalable and efficient solution for large-scale O-RAN deployments with diverse QoS demands.

	\footnotesize 
	\bibliographystyle{IEEEtran}
	\bibliography{refs_journal_MCRA}

@ARTICLE{10969847,
  author={Agarwal, Bharat and Irmer, Ralf and Lister, David and Muntean, Gabriel-Miro},
  journal={IEEE Commun. Surv. Tutor.}, 
  title={{Open RAN for 6G Networks: Architecture, Use Cases and Open Issues}}, 
  year={2026},
  volume={28},
  number={},
  pages={2881-2924}
  }

@INPROCEEDINGS{9488684,
  author={Chen, Yongce and Wu, Yubo and Hou, Y. Thomas and Lou, Wenjing},
  booktitle={Proc. IEEE INFOCOM}, 
  title={{mCore: Achieving Sub-millisecond Scheduling for 5G MU-MIMO Systems}}, 
  year={2021},
  month     = {Jul.},
  address   = {Vancouver, BC, Canada},
  pages={1-10}
}

@article{jager2020blockwise,
  title={The blockwise coordinate descent method for integer programs},
  author={J{\"a}ger, Sven and Sch{\"o}bel, Anita},
  journal={Mathematical Methods of Operations Research},
  volume={91},
  number={2},
  pages={357--381},
  year={2020},
  publisher={Springer}
}

@article{houska2016augmented,
  title={{An augmented Lagrangian based algorithm for distributed nonconvex optimization}},
  author={Houska, Boris and Frasch, Janick and Diehl, Moritz},
  journal={SIAM J OPTIMIZ},
  volume={26},
  number={2},
  pages={1101--1127},
  year={2016},
  publisher={SIAM}
}

@inproceedings{kaziu2024approximate,
  author    = {Brikena Kaziu and Nikita Shanin and Danilo Spano and Li Wang and Wolfgang Gerstacker and Robert Schober},
  title     = {{Approximate Partially Decentralized Linear EZF Precoding for Massive MU-MIMO Systems}},
  booktitle = {Proc. {IEEE} {VTC-Fall}},
  pages     = {1--6},
  month     = {Oct.},
  year      = {2024},
  address   = {Los Angeles, CA, USA}
}

@book{marsch2011coordinated,
  title={Coordinated Multi-Point in Mobile Communications: From Theory to Practice},
  author={Marsch, Patrick and Fettweis, Gerhard P},
  year={2011},
  publisher={Cambridge University Press}
}

@article{xu2025distributed,
  title={Distributed signal processing for extremely large-scale antenna array systems: State-of-the-art and future directions},
  author={Xu, Yanqing and Larsson, Erik G and Jorswieck, Eduard A and Li, Xiao and Jin, Shi and Chang, Tsung-Hui},
  journal={IEEE J. Sel. Top. Signal Process},
  year={2025},
  publisher={IEEE}
}

@article{jaeckel2014quadriga,
  title={{QuaDRiGa: A 3-D Multi-cell Channel Model with Time Evolution for Enabling Virtual Field Trials}},
  author={Jaeckel, Stephan and Raschkowski, Leszek and B{\"o}rner, Kai and Thiele, Lars},
  journal={IEEE Trans. Antennas Propag.},
  volume={62},
  number={6},
  pages={3242--3256},
  year={2014},
  publisher={IEEE}
}

@article{he2022cross,
  author  = {Shiwen He and Zhenyu An and Jianyue Zhu and Min Zhang and Yongming Huang and Yaoxue Zhang},
  title   = {{Cross-Layer Optimization: Joint User Scheduling and Beamforming Design With QoS Support in Joint Transmission Networks}},
  journal = {IEEE Trans. Commun.},
  volume  = {71},
  number  = {2},
  pages   = {792--807},
  year    = {2022}
}

@article{jorswieck2009throughput,
  author  = {Eduard A. Jorswieck and Aydin Sezgin and Xi Zhang},
  title   = {{Throughput Versus Fairness: Channel-Aware Scheduling in Multiple Antenna Downlink}},
  journal = {EURASIP J. Wirel. Commun. Netw.},
  volume  = {2009},
  number  = {1},
  pages   = {271540},
  year    = {2009},
  publisher = {Springer}
}

@inproceedings{bischoff2024real,
  author    = {Tano Bischoff and Martin Kasparick and Ehsan Tohidi and S{\l}awomir Sta{\'n}czak},
  title     = {{Real-time Algorithms for Combined {eMBB} and {URLLC} Scheduling}},
  booktitle = {Proc. {WSA}},
  pages     = {1--5},
  month     = {Mar.},
  year      = {2024},
  address   = {Nuremberg, Germany},
}

@inproceedings{tanbourgi2014analysis,
  author    = {Ralph Tanbourgi and Sarabjot Singh and Jeffrey G. Andrews and Friedrich K. Jondral},
  title     = {{Analysis of {Non-Coherent} {Joint-Transmission} Cooperation in {Heterogeneous Cellular Networks}}},
  booktitle = {Proc. {IEEE} {ICC}},
  pages     = {5160--5165},
  month     = {Jun.},
  year      = {2014},
  address   = {Sydney, Australia},
  doi       = {10.1109/ICC.2014.6884100}
}

@article{chen2023om,
  author  = {Yongce Chen and Y. Thomas Hou and Wenjing Lou and Jeffrey H. Reed and Sastry Kompella},
  title   = {{{OM}$^3$: Real-Time Multi-Cell {MIMO} Scheduling in {5G O-RAN}}},
  journal = {IEEE J. Sel. Areas Commun.},
  volume  = {42},
  number  = {2},
  pages   = {339--355},
  year    = {2023},
  doi     = {10.1109/JSAC.2023.3240140}
}

@article{idrees2022throughput,
  author  = {Muhammad Idrees and Xiaomin Qi and Alam Zaib and Adnan Tariq and Irfan Ullah and Zahid Mahmood and Shahid Khattak},
  title   = {{Throughput Maximization in Clustered Cellular Networks by Using Joint Resource Scheduling and Fractional Frequency Reuse-Aided Coordinated Multipoint}},
  journal = {Arab. J. Sci. Eng.},
  pages   = {1--15},
  year    = {2022},
  doi     = {10.1007/s13369-022-07162-w}
}

@article{gamvrelis2022slinr,
  author  = {Tyler Gamvrelis and Zehua Li and Ahmad Ali Khan and Raviraj S. Adve},
  title   = {{{SLINR}-Based Downlink Optimization in {MU-MIMO} Networks}},
  journal = {IEEE Access},
  volume  = {10},
  pages   = {123956--123970},
  year    = {2022},
  doi     = {10.1109/ACCESS.2022.3224564}
}

@article{li2015multicell,
  author  = {Min Li and Iain B. Collings and Stephen V. Hanly and Chunshan Liu and Philip Whiting},
  title   = {{Multicell Coordinated Scheduling with Multiuser Zero-Forcing Beamforming}},
  journal = {IEEE Trans. Wireless Commun.},
  volume  = {15},
  number  = {2},
  pages   = {827--842},
  year    = {2015},
  doi     = {10.1109/TWC.2015.2487298}
}

@inproceedings{li2014multicell,
  author    = {Min Li and Chunshan Liu and Iain B. Collings and Stephen V. Hanly},
  title     = {{Multicell Coordinated Scheduling with Multiuser {ZF} Beamforming}},
  booktitle = {Proc. {IEEE} {ICC}},
  pages     = {5006--5011},
  month     = {Jun.},
  year      = {2014},
  address   = {Sydney, Australia},
  doi       = {10.1109/ICC.2014.6884083}
}

@inproceedings{chen2022m,
  author    = {Yongce Chen and Y. Thomas Hou and Wenjing Lou and Jeffrey H. Reed and Sastry Kompella},
  title     = {{{M}$^3$: A Sub-Millisecond Scheduler for Multi-Cell {MIMO} Networks under {C-RAN} Architecture}},
  booktitle = {Proc. IEEE INFOCOM},
  pages     = {130--139},
  year      = {2022},
  doi       = {10.1109/INFOCOM48880.2022.9796733}
}

@inproceedings{sun2010genetic,
  author    = {Fanglei Sun and Mingli You and Jin Liu and Zhenning Shi and Pingping Wen and Jianguo Liu},
  title     = {{Genetic Algorithm Based Multiuser Scheduling for Single- and Multi-Cell Systems with Successive Interference Cancellation}},
  booktitle = {Proc. {IEEE} {PIMRC}},
  pages     = {1230--1235},
  month     = {Sep.},
  year      = {2010},
  address   = {Istanbul, Turkey},
  doi       = {10.1109/PIMRC.2010.5671794}
}

@article{denis2021improving,
  author  = {Juwendo Denis and Mohamad Assaad},
  title   = {{Improving Cell-Free Massive {MIMO} Networks Performance: A User Scheduling Approach}},
  journal = {IEEE Trans. Wireless Commun.},
  volume  = {20},
  number  = {11},
  pages   = {7360--7374},
  year    = {2021},
  doi     = {10.1109/TWC.2021.3098507}
}

@inproceedings{yu2012coordinated,
  author    = {Lei Yu and Eleftherios Karipidis and Erik G. Larsson},
  title     = {{Coordinated Scheduling and Beamforming for Multicell Spectrum Sharing Networks Using Branch \& Bound}},
  booktitle = {Proc. {EUSIPCO}},
  pages     = {819--823},
  month     = {Aug.},
  year      = {2012},
  address   = {Bucharest, Romania}
}

@article{khan2020optimizing,
  author  = {Ahmad Ali Khan and Raviraj S. Adve and Wei Yu},
  title   = {{Optimizing Downlink Resource Allocation in Multiuser {MIMO} Networks via Fractional Programming and the Hungarian Algorithm}},
  journal = {IEEE Trans. Wireless Commun.},
  volume  = {19},
  number  = {8},
  pages   = {5162--5175},
  year    = {2020},
  doi     = {10.1109/TWC.2020.2993973}
}

@article{antonioli2020decentralized,
  author  = {Roberto Pinto Antonioli and G\'abor Fodor and Pablo Soldati and Tarcisio Ferreira Maciel},
  title   = {{Decentralized User Scheduling for Rate-Constrained Sum-Utility Maximization in the {MIMO} {IBC}}},
  journal = {IEEE Trans. Commun.},
  volume  = {68},
  number  = {10},
  pages   = {6215--6229},
  year    = {2020},
  doi     = {10.1109/TCOMM.2020.3014042}
}

@article{polese2023understanding,
  author  = {Michele Polese and Leonardo Bonati and Salvatore D'oro and Stefano Basagni and Tommaso Melodia},
  title   = {{Understanding {O-RAN}: Architecture, Interfaces, Algorithms, Security, and Research Challenges}},
  journal = {IEEE Commun. Surv. Tutor.},
  volume  = {25},
  number  = {2},
  pages   = {1376--1411},
  year    = {2023},
  doi     = {10.1109/COMST.2023.3242547}
}

@article{sun2010eigen,
  author  = {Liang Sun and Matthew R. McKay},
  title   = {{Eigen-Based Transceivers for the {MIMO} Broadcast Channel with Semi-Orthogonal User Selection}},
  journal = {IEEE Trans. Signal Process.},
  volume  = {58},
  number  = {10},
  pages   = {5246--5261},
  year    = {2010},
  doi     = {10.1109/TSP.2010.2053709}
}

@article{zhou2025multi,
  author  = {Zihao Zhou and Cheng-Xiang Wang and Xinyue Chen and Li Zhang and Jie Huang and Lijian Xin and Xiping Wu},
  title   = {{Multi-Frequency Wireless Channel Measurements and Modeling in Urban Macro Scenarios}},
  journal = {IEEE Trans. Veh. Technol.},
  year    = {2025},
  doi     = {10.1109/TVT.2025.3569181} 
}

@inproceedings{lin2022carrier,
  author    = {Pingping Lin and Chunlei Hu and Xiao Li and Jinyang Yu and Weiliang Xie},
  title     = {{Research on Carrier Aggregation of 5G NR}},
  booktitle = {Proc. IEEE BMSB},
  address   = {Bilbao, Spain},
  pages     = {1--5},
  month     = {Jun.},
  year      = {2022},
  doi       = {10.1109/BMSB55706.2022.9828744}
}

@inproceedings{seifi2011coordinated,
  author    = {Nima Seifi and Michail Matthaiou and Mats Viberg},
  title     = {{Coordinated User Scheduling in the Multi-Cell {MIMO} Downlink}},
  booktitle = {Proc. {IEEE} {ICASSP}},
  pages     = {2840--2843},
  month     = {May},
  year      = {2011},
  address   = {Prague, Czech Republic},
  doi       = {10.1109/ICASSP.2011.5947043}
}

@book{nocedal1999numerical,
  title={Numerical optimization},
  author={Nocedal, Jorge and Wright, Stephen J},
  year={1999},
  publisher={Springer}
}

@article{bian2024decentralizing,
  title={{Decentralizing Coherent Joint Transmission Precoding via Fast ADMM with Deterministic Equivalents}},
  author={Bian, Xinyu and Liu, Yuhao and Xu, Yizhou and Hou, Tianqi and Wang, Wenjie and Mao, Yuyi and Zhang, Jun},
  journal={arXiv preprint arXiv:2403.19127},
  year={2024}
}

@misc{vershynin2009high,
  title={High-dimensional probability},
  author={Vershynin, Roman},
  year={2009},
  publisher={Cambridge University Press Cambridge, UK}
}

@ARTICLE{Yubo_TWC24,
  author={Wu, Yubo and Shi, Yi and Hou, Y. Thomas and Lou, Wenjing and Reed, Jeffrey H. and DaSilva, Luiz A.},
  journal={IEEE Trans. Wireless Commun.}, 
  title={{R$^3$: A Real-Time Robust {MU-MIMO} Scheduler for {O-RAN}}}, 
  year={2024},
  volume={23},
  number={11},
  pages={17727-17743},
  doi={10.1109/TWC.2024.3456596}}
	\ifCLASSOPTIONcaptionsoff
	\newpage
	\fi
	
\end{document}